\newcommand{\ket}[1]{\lvert #1\rangle}
\newcommand{\bra}[1]{\langle #1\rvert}
\newcommand{\braket}[2]{\langle #1\mid #2\rangle}
\newcommand{\abs}[1]{\left\lvert #1\right\rvert}
\newcommand{\norm}[1]{\left\lVert #1\right\rVert}
\def\ketbra#1{\ket{#1}\@ifnextchar\bgroup{\ketbra@two}{\bra{#1}}}
\def\ketbra@two#1{\bra{#1}}
\DeclareMathOperator{\Tr}{Tr}
\renewcommand{\geq}{\geqslant}
\renewcommand{\leq}{\leqslant}
\def\calH{\mathcal{H}}
\def\calL{\mathcal{L}}
\newtheorem{theorem}{Theorem}
\newtheorem{lemma}{Lemma}
\newtheorem{definition}{Definition}
\NewDocumentEnvironment{theoremrestated}{o m}{
  \begingroup
  
  \addtocounter{theorem}{-1}
  \IfNoValueTF{#1}{\begin{theorem}}{\begin{theorem}[#1]}
}{
  \end{theorem}
  \endgroup
}
\begin{document}

\title{\Large{\textbf{On the Complexity of Quantum States and Circuits  \\ from the Orthogonal and Symplectic Groups}}}

\author[,1]{Oxana Shaya \thanks{\href{mailto:shayaoxana@gmail.com}{shayaoxana@gmail.com}
}}
\author[2,3]{Zoë Holmes}
\author[1]{Christoph Hirche}
\author[2,3]{Armando Angrisani \thanks{\href{mailto:armando.angrisani@epfl.ch}{armando.angrisani@epfl.ch}}}

\affil[1]{{Institute for Information Processing (tnt/L3S), Leibniz Universit\"at Hannover}, {Germany}}

\affil[2]{{Institute of Physics, Ecole Polytechnique Fédérale de Lausanne}, {Switzerland} }
\affil[3]{{Centre for Quantum Science and Engineering, Ecole Polytechnique F\'{e}d\'{e}rale de Lausanne}, {Switzerland}}
\date{\today}
\maketitle

\begin{abstract}
Understanding the complexity of quantum states and circuits is a central challenge in quantum information science, with broad implications in many-body physics, high-energy physics and quantum learning theory. A common way to model the behaviour of typical states and circuits involves sampling unitary transformations from the Haar measure on the unitary group. In this work, we depart from this standard approach and instead study structured unitaries drawn from other compact connected groups, namely the symplectic and special orthogonal groups. By leveraging the concentration of measure phenomenon, we establish two main results. We show that random quantum states generated using symplectic or orthogonal unitaries typically exhibit an exponentially large strong state complexity, and are nearly orthogonal to one another. Similar behavior is observed for designs over these groups. Additionally, we demonstrate the average-case hardness of learning circuits composed of gates drawn from such classical groups of unitaries.
Taken together, our results demonstrate that structured subgroups can exhibit a complexity comparable to that of the full unitary group.
\end{abstract}

{
  \hypersetup{linkcolor=blue}
  \tableofcontents
}

\section{Introduction}
Random processes lie at the heart of quantum information processing,
as they are essential for probing the properties of many-body quantum systems~\cite{Elben_2022},
the tomography of quantum states~\cite{classical_shadows}, quantum advantage experiments via sampling~\cite{random_q_c, bouland2019quantum}, and quantum cryptography proposals~\cite{ananth2022cryptographypseudorandomquantumstates}. Moreover, they serve as key tools in quantum chaos~\cite{Fisher_2023}, quantum gravity~\cite{Hayden_2007} and quantum machine learning~\cite{cerezo2021variational, larocca2024reviewbarrenplateausvariational}.
In these contexts, the average-case properties of quantum states and processes have traditionally been studied using the Haar measure on the unitary group $U$, which can be informally understood as the uniform distribution over all unitary transformations~\cite{mele2024introduction}.

In this work, we move beyond this extensively studied setting and consider unitary transformations from three compact and connected unitary subgroups: the special unitary group\footnote{In all the applications considered in this paper, sampling unitaries from the special unitary group is equivalent to sampling from the full unitary group. This is because the two groups differ only by a global phase, and we always work with the unitary channel $U(\cdot)U^\dag$, which is insensitive to global phases. Nevertheless, we state our results for all three classical compact and connected subgroups of the unitary group for completeness.}, the unitary symplectic group, and the special orthogonal group. 
These groups are referred to as the \textit{classical} compact and connected groups and naturally arise in various contexts in quantum information theory including the study of (real) randomised benchmarking~\cite{Hashagen2018realrandomized} and symmetry-respecting classical shadows~\cite{west2024realclassicalshadows, west2026classicalshadowsarbitrarygroup}. These constitute more efficient protocols for benchmarking and tomography when the input states or circuits exhibit special symmetries.
Group $k$-\emph{designs} are ensembles of unitaries whose statistical moments match those of the Haar measure up to the $k$-th moment  \cite{mele2024introduction}.
Although these protocols often merely require $2$-designs, higher order designs can be used for randomized benchmarking in order to experimentally characterize the higher order properties of quantum noises \cite{Nakata_2021}.
Even though real-valued unitaries cannot be completely pseudorandom they can still exhibit some pseudorandom properties \cite{brakerski2024realvaluedsomewhatpseudorandomunitaries}.
Results for the special orthogonal group are generally interesting in the context of real
encodings of complex computations~\cite{rudolph20022rebitgateuniversal}.
Recent work in the many-body literature also highlights that ``Haar-like'' output statistics can emerge
in structured settings already at shallow depth and finite system size. In particular, Ref.~\cite{sauliere2025universality}
derives analytical predictions on anticoncentration for random tensor-network ensembles and tests a universal
finite-depth crossover for both unitary and orthogonal brickwork circuits. 
 Moreover, additive-error state designs over the orthogonal and symplectic groups are achievable in $\mathcal{O}(\log\left(\frac{n}{\varepsilon}\right))$ depth in the superblock architecture \cite{grevink2025glueshortdepthdesignsunitary}. This suggests that such ensembles may be accessible with comparatively shallow near-term circuits. 
A hardware-native implementation to approximate the orthogonal group relies on combining the entangling CNOT gate with continuous real-valued single-qubit rotations, such as $R_y(\theta)$. This combination densely generates the required real-subspace rotations~\cite{shi2003toffoli}.
Natively generating the unitary symplectic group on a one-dimensional lattice requires a spatially dependent assignment of Pauli operators to prevent the global circuit from breaking symmetry and leaking into the broader special unitary group \cite{garcia2024architectures}. Specifically, the first pair of adjacent qubits must be driven by locally symplectic generators from the set $\{X_1, Y_1, Y_2, X_1 X_2\}$. To preserve the global symplectic structure, all subsequent adjacent qubit pairs in the chain must be restricted to special orthogonal generators, utilizing the set $\{Y_i, Y_{i+1}, X_i Y_{i+1}, Y_i X_{i+1}\}$.
This provides additional motivation for
understanding typical-state and typical-output properties for orthogonal ensembles, and more broadly for ensembles beyond unitary designs.

Our motivation for considering these subgroups is to understand whether their associated ensembles can replicate average-case properties typically attributed to Haar-random unitaries. As a first step in this direction, we consider the notion of
strong state complexity,
 which measures the complexity of a state by the distance to the computationally trivial state, the maximally mixed state~\cite{Brand_o_2021}.
Most pure quantum states, i.e., typical states generated by random unitaries, have exponentially large strong state complexity and therefore are not physical as they can only be produced after an exponentially long time
\cite{Poulin_2011}.
Our first result shows that this phenomenon extends past the unitary group: states prepared via random unitaries from all the classical compact and connected groups also exhibit this exponential complexity. Moreover, they are geometrically maximally separated.
We refer to Theorem~\ref{thm:high_complexity} for a formal statement of this result. Our results also extend to $k$-designs over the groups. Thus, the exponential high complexity of typical states persists even for these structured subgroup ensembles.
This phenomenon has broad implications:
the complexity of random states plays a key role in modelling quantum chaos~\cite{Balasubramanian_2022}, quantum gravity~\cite{baiguera2025quantumcomplexitygravityquantum} and condensed matter~\cite{Chen_2010}. Furthermore, the fact that most
real states have
exponentially large strong state complexity aligns with the result that imaginarity of states cannot be efficiently tested and is therefore not necessary for the generation of pseudorandom quantum states~\cite{prs}, which are states that are indistinguishable for a quantum polynomial-time adversary from random states sampled uniformly from the entire Hilbert space.

The fundamental role of average-case analysis extends beyond state complexity to the task of sampling from the output distribution of random quantum circuits, a cornerstone of many proposals for demonstrating quantum advantage~\cite{mullane2020sampling}. From the perspective of learning theory, a natural question arises: what is the complexity of learning the output distribution of a quantum circuit? In particular, this involves understanding how many queries are needed, assuming black-box access to the circuit, to approximate its output distribution. This approximation is restricted to accessing only aggregate statistical properties, as formalized in the statistical query (SQ) framework~\cite{kearns1998efficient, nietner2023average}. Motivated by this question, and in analogy to~\cite[Theorem 2]{nietner2023average}, we investigate the average-case complexity of learning output distributions of quantum circuits in the SQ model, focusing on the setting where circuits are drawn from subgroups of the unitary group. We find that the output distribution of circuits drawn from the classical compact groups is average-case hard to learn, which is formally stated and proven in Theorem~\ref{th:average_case_hardness}.

A key structural ingredient in establishing average-case hardness is that the output distributions of such circuits are far from uniform. This feature, in turn, has several practical applications: Aaronson and Chen~\cite{aaronson2016complexity} introduced \emph{heavy output generation} as a task for verifying quantum advantage, which is believed to be infeasible without true quantum sampling. We show that the output distributions induced by ensembles of special orthogonal and symplectic unitaries are indeed far from uniform, making them suitable candidates for performing this task.
Beyond learning and verification, the average-case hardness of learning output distributions of quantum circuits also connects to foundational areas such as quantum cryptography~\cite{fefferman2025hardnesslearningquantumcircuits, wadhwa2024learningquantumprocessesquantum} and the study of black hole dynamics~\cite{yang2023complexitylearningpseudorandomdynamics}, where similar complexity-theoretic barriers emerge.

In the context of quantum machine learning, quantum circuits have also been proposed as a model class for generative modeling, known as
\emph{quantum circuit Born machines} (QCBMs)~\cite{benedetti2019parameterized,coyle2020born}.
A QCBM specifies a parameterized $n$-qubit circuit $U(\theta)$ acting on $\ket{0}^{\otimes n}$, whose
computational-basis measurement induces the model distribution
\[
p_\theta(x)\;=\;\bigl|\bra{x}U(\theta)\ket{0}^{\otimes n}\bigr|^2,\qquad x\in\{0,1\}^n.
\]
The goal of training is to find parameters $\theta$ such that $p_\theta$ approximates
$p_{\mathrm{data}}$ in a chosen statistical distance. Since a QCBM is implemented on quantum hardware, one typically has efficient access to samples from $p_\theta$, rather than to the probabilities themselves. Accordingly, many practical training objectives and gradient estimators are based on empirical averages of bounded test functions evaluated on samples drawn from $p_\theta$ and $p_{\mathrm{data}}$. Therefore, training relies on finitely many measurement shots which fits naturally into the SQ framework.

A crucial modeling point is that the data-generating mechanism—or the trained ansatz family itself—typically induces an \emph{ensemble} of circuits that is effectively random.
A growing literature has linked such randomness assumptions, to obstructions on the training and generalization of QML models - exponential concentration of kernel values when the data-induced unitary ensemble approaches a $2$-design~\cite{thanasilp2024exponential}
and barren plateaus in QCBMs trained with explicit losses~\cite{Rudolph2024}.
A natural attempt to evade these barriers is to inject symmetry into the ansatz—restricting circuits to a structured subgroup such as $SO(D)$ or $Sp(D/2)$, in the spirit of geometric quantum machine learning~\cite{schatzki2024theoretical, ragone2023representationtheorygeometricquantum}. 
 We establish SQ hardness for Haar-random orthogonal and symplectic ensembles.
Our results show that this strategy does not, by itself, recover learnability of QCBMs and respectively trainability of QCBMs with a faithful loss function. 
Our results suggest that excessively expressive ansatze should be avoided and they motivate the adoption of warm-starts, a topic which is currently actively studied~\cite{Puig_2025}.

\paragraph*{Related works.}

The complexity of learning various Born distributions has been studied in previous work. For matchgate circuits, which are isomorphic to the special orthogonal group of dimension $2n$, worst-case hardness of learning has been established in Ref.~\cite{nietner2024freefermiondistributionshard}.
For quantum circuits composed of two-qubit Clifford gates and at super-logarithmic depth, the output distribution is worst-case hard to learn using statistical queries~\cite[Theorem 4]{hinsche2022single}. In contrast, with sample access, the task is efficiently solvable, yet becomes worst-case hard when injecting a single T-gate into at least linear depth Clifford circuits~\cite[Theorem 3]{hinsche2022single}.

Ref.~\cite{nietner2023average} investigates the output distributions of generic quantum circuits composed of randomly drawn 2-local unitaries (with respect to the Haar measure on $U(4)$) arranged in a Brickwork-like architecture.
At depth $d = 1$, the output distributions are product distributions and thus easily learnable. At super-logarithmic depth, learning even a constant fraction of the distribution class requires a super-polynomial number of queries~\cite[Theorem 6]{nietner2023average}. At linear depth, the second moment of the Brickwork ensemble approximates that of the full unitary group~\cite{Haferkamp_2021}, and learning even an exponentially small fraction of the distributions requires exponentially many queries. In the infinite-depth limit, the ensemble converges to the Haar measure over the full unitary group, allowing it to model arbitrary distributions. Consequently, learning a doubly exponentially small fraction of these distributions requires doubly exponentially many queries~\cite[Theorem 2]{nietner2023average}.

Very recently, West established analogous average-case SQ hardness results for Born distributions arising from the
\emph{circular} and \emph{(fermionic) Gaussian} ensembles~\cite{west2026circularGaussianSQ}. This complements the present work, which focuses on ensembles induced
directly by the classical compact \emph{groups} $SO(D)$ and $Sp(D/2)$.

In this work, we utilize Gaussian integration
to determine averages over the classical compact groups.
Recent results appeared during the completion of this work show that the symplectic group induces an exact state design~\cite{west2024randomensemblessymplecticunitary} and that the $D$-dimensional special orthogonal group forms for $t < \sqrt{\varepsilon D}$ an $\varepsilon$-approximate unitary $t$-design~\cite{schatzki2024randomrealvaluedcomplex}.

These findings imply that ensembles drawn from these groups can simulate certain statistical properties of Haar-random unitaries, even at finite depth. Consequently, they can be used to establish the computational hardness of learning the corresponding Born distributions. Nonetheless, our use of Gaussian integration theorems offers an alternative, conceptually simpler proof technique that avoids reliance on Weingarten calculus~\cite{weingarten1978asymptotic, Collins_2006} and may be of independent interest.

\paragraph*{Overview of results and proof ideas.}
We study Haar-induced ensembles over the classical compact groups $G\in\{SU(D),SO(D),Sp(D/2)\}$ with $D=2^n$,
and address two operational questions: (i) how hard is it to distinguish a typical Haar-random state
$\ket{\psi}=U\ket{0^n}$ from the maximally mixed state using measurements implementable by circuits of bounded size, and
(ii) how hard is it to learn the corresponding Born distribution $P_U(x)=|\langle x|U|0^n\rangle|^2$ from noisy
measurement data.
Our first set of results shows that, with overwhelming probability, Haar-random orthogonal and symplectic states have
exponentially large \emph{strong state complexity}: no measurement implementable with at most $r$ elementary gates can
distinguish them from maximally mixed beyond a prescribed advantage (Theorem~\ref{thm:high_complexity}). Moreover, high complexity is compatible
with strong geometric separation: one can pack doubly exponentially many such states that are nearly maximally separated
in trace distance (Theorem~\ref{thm:high_complexity_separation}).
Our second set of results concerns learning. We compute explicit constants $M_G$
governing
the average total-variation distance of $P_U$ to the uniform distribution, and combine this with concentration of measure
to obtain an average-case statistical query (SQ) lower bound: learning even a tiny fraction of these Born distributions to
nontrivial accuracy from $\tau$-accurate queries requires doubly exponentially many queries (Theorem~\ref{th:average_case_hardness}).

Technically, the analysis rests on two recurring ingredients.
On the one hand, L\'evy-type concentration inequalities on $SO(D)$ and $Sp(D/2)$ control fluctuations of Lipschitz functions
of $U$ and of the induced state and output distribution.
On the other hand, we use a Gaussian-integration reformulation of Haar averages that treats the unitary, orthogonal, and
symplectic cases in a unified manner; together, these yield explicit average-case bounds without resorting to Weingarten
calculus. The SQ hardness statements then follow by plugging these concentration estimates into a general lower-bounding technique
(Lemma~\ref{lemma:nie}).

\paragraph*{Organization of the paper.}
The remainder of the paper is organized as follows.
Section~\ref{sec:preliminaries} introduces the setting and notation: the classical compact groups $SU(D)$, $SO(D)$ and $Sp(D/2)$,
the induced state and Born-distribution ensembles, and the concentration tools we use throughout (L\'evy-type bounds and
their design variants). We also recall the two operational notions that underlie our results—strong state complexity and
learning in the statistical query (SQ) model.
Section~\ref{sec:main-results} states and discusses the main theorems: high strong state complexity and geometric separation of typical states,
as well as average-case hardness of learning Born distributions; it also introduces the Gaussian-integration technique used
to evaluate the relevant Haar averages.
Section~\ref{sec:discussion} contains a discussion of implications and open questions, with emphasis on the role of design order and on the
interpretation of SQ hardness for learning-inspired applications.
The proofs of all main results, together with additional technical lemmas and intermediate estimates (including the bounds
on the average distance of $P_U$ to the uniform distribution), are deferred to Appendix~\ref{app:proofs}.

\section{Framework}
\label{sec:preliminaries}
\paragraph*{Notation and ensembles.}
We consider an $n$-qubit system with Hilbert space $\mathcal{H}=(\mathbb{C}^2)^{\otimes n}\simeq \mathbb{C}^{D}$ where $D=2^n$.
We focus on the three compact and connected subgroups of the group of $D$-dimensional unitary matrices, namely the special unitary group $SU(D)$, the unitary symplectic group $Sp(\frac{D}{2})$, and the special orthogonal group $SO(D)$. They are defined as follows:
\begin{align}\label{eq:def_groups}
    SU(D) &= \{g \in GL(D, \mathbb{C})\mid g^\dag g = \mathbb{I}_D \land \det g = 1 \},\\
    Sp\left(\frac{D}{2}\right) &= \{g \in GL\left(\frac{D}{2}, \mathbb{H}\right) \mid g^\dag g = \mathbb{I}_\frac{D}{2} \},\\
     SO(D) &= \{ g\in GL(D, \mathbb{R}) \mid g^\dag g = \mathbb{I}_D \land \det g = 1 \},
\end{align}
where $\dag$ denotes the conjugate transpose over the respective field. We identify $\mathbb{H}^{D/2}\cong\mathbb{C}^{D}$ (with $D$ even)  via $a+j b \mapsto (a,b)$ with $a,b\in\mathbb{C}$, under which the quaternionic unitary group becomes the subgroup $Sp(D/2)\subset U(D)$.
For a compact group $G\subseteq U(D)$ we write $\mu_G$ for its Haar measure.
Each group induces
(i) a pure \emph{state ensemble} $\mathcal{S}_G=\{U\ketbra{0}^{\otimes n}U^\dag\mid U \in G \}$ and
(ii) a \emph{Born distribution ensemble} $\mathcal{P}_G=\{P_U \mid U \in G \}$ where
\[
P_U(x)=\left|\bra{x}U\ket{0}^{\otimes n}\right|^2,\qquad x\in\{0,1\}^n.
\]
We write $\mathcal{U}$ for the uniform distribution on $\{0,1\}^n$.
We denote \(r \simeq s\) to indicate that \(r\) and \(s\) have for large dimensions the same asymptotic scaling.
\paragraph*{Concentration of measure.}
A central tool to prove our results is Lévy's lemma on the classical compact groups, which states that functions on these groups are concentrated around their mean.

\begin{theorem}[Lévy's Lemma]
\label{th:levi}
See \cite[Theorem 5.16 and Theorem 5.5]{Meckes_2019}. Let $G\in\{SU(D),Sp(\frac{D}{2}),SO(D)\}$ and let $f:G\to \mathbb{R}$ be $L$-Lipschitz w.r.t.\ the Hilbert-Schmidt distance, i.e. $ \forall V_1, V_2 \in G :\abs{f(V_1)-f(V_2)}\le L\|V_1-V_2\|_2$.
Then for all $\tau\ge 0$,
\[
\Pr_{U\sim \mu_G}\!\left(\bigl|f(U)-\mathbb{E}_{U\sim \mu_G}[f(U)]\bigr|\ge \tau\right)\le
2\exp\!\left(-\frac{\tau^2}{2L^2 C_G}\right),
\]
where $C_{SO}= \frac{4}{D-2}$, $C_{Sp}=\frac{1}{\frac{D}{2}+1}$ and $C_{SU}=\frac{2}{D}$.
\end{theorem}
The largest upper bound is obtained for $C_G = C_{SO}$ which is therefore the constant we consider when making statements about all unitary subgroups combined.

\smallskip
\noindent \textbf{Group $k$-designs.}
Unitary $k$-designs are sets of unitary matrices that are evenly distributed in the sense that the average of any $k$-th order polynomial over the design equals the average over the entire group. We can phrase the definition in terms of monomials.

\begin{definition}[Definition 2.2 in Ref.~\cite{low2009large}]
A \emph{monomial} in elements of a matrix $U = [U_{i,j}] \in \mathbb{C}^{D \times D}$ is any expression of the form
\begin{align}
    M(U) = \prod_{k=1}^m U_{i_k, j_k}^{a_k} \overline{U}_{i'_k, j'_k}^{b_k},
\end{align}
 where $a_k, b_k \in \mathbb{N}$, and the indices $i_k, j_k, i'_k, j'_k \in \{1, \dots, D\}$.

A balanced monomial of degree $k$  is a monomial containing $k$
conjugated elements and $k$ unconjugated elements.

A polynomial is of degree $k$ if it is a
sum of balanced monomials of degree at most $k$.
\end{definition}

For example, under this definition, $U_{ij}U^{*}_{pq}$ is a balanced monomial of degree $1$ and $U_{ij}^2U^{*}_{pq}U^{*}_{p'q'}$ is a balanced monomial of degree $2$. We can now use this definition to define approximate $k$-designs.

\begin{definition}[Definition 2.6 in Ref.~\cite{low2009large}]
Let $\mu, \nu$ be distributions over $U(D)$.
    Then $\nu$ is an $\varepsilon$-approximate $k$-design with respect to $\mu$ if, for all balanced monomials $M$ of degree $\leq k$,
    $\abs{\mathbb{E}_{U\sim\nu} M(U) - \mathbb{E}_{U\sim\mu} M(U) } \le \frac{\varepsilon}{D ^k}$.
\end{definition}

 We denote the state ensemble induced by sampling $U$ from an $\varepsilon$-approximate unitary $k$-design over $G$ by  $\mathcal{S}^{(k,\varepsilon)}_G$.
For any balanced polynomial function, concentration bounds over the full unitary group immediately carry over to
$k$-designs over the group. The following theorem was originally considered for the Haar measure over the full unitary group. Yet the proof only uses the closeness of moments over designs to the Haar measure moments (\cite[Lemma 3.4]{low2009large}) and therefore only the definition of $k$-designs in general.

\begin{theorem}[Adapted from Theorem 1.2 in Ref.~\cite{low2009large}] \label{th:large_deviation}
Let $G$ be one of the groups $SU(D), Sp(\frac{D}{2}), SO(D)$.
Let $f$ be a polynomial of degree $K$, $f(U) = \sum_i \alpha_i M_i(U)$ where $M_i(U)$ are balanced monomials of degree $K$. Let $\alpha = \sum_i \abs{\alpha_i}$ and $\mu$ be the Haar measure over $G$. Given that
\begin{align}
    \Pr_{U\sim\mu_G}(\abs{f(U)- \mathbb{E}_{U\sim \mu_G}[f(U)]} \geq \delta) \leq C e^{-a \delta^2}
\end{align}
Then for an $\varepsilon$-approximate $k$-design $\nu$ over $G$
\begin{align}
    \Pr_{U\sim\nu}(\abs{f(U)- \mathbb{E}_{U\sim \nu}[f(U)]} \geq \delta) \leq \frac{1}{\delta^{2m}} \left( C\left( \frac{m}{a}\right)^m +\frac{\varepsilon}{D^k} (\alpha + {\abs{\mathbb{E}_{U\sim \nu}[f(U)]}})^{2m} \right)
\end{align}
for integer $m$ with $2mK \leq k$.
\end{theorem}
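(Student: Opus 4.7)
The plan is a standard Markov-plus-moment-matching argument. Define $g(U) \coloneqq f(U) - \mathbb{E}_{U\sim\mu}[f(U)]$ and apply Markov's inequality to $|g(U)|^{2m}$:
\begin{align*}
\Pr_{U\sim\nu}\bigl(|g(U)| \geq \delta\bigr)
= \Pr_{U\sim\nu}\bigl(|g(U)|^{2m} \geq \delta^{2m}\bigr)
\leq \frac{\mathbb{E}_{U\sim\nu}[|g(U)|^{2m}]}{\delta^{2m}}.
\end{align*}
The task is then to (i) replace the expectation over $\nu$ by the expectation over the Haar measure $\mu$, paying the approximate-design error, and (ii) control the resulting Haar moment using the assumed Gaussian tail bound.

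For step (i), observe that $g$ is itself a polynomial of degree at most $K$ (it is $f$ shifted by a constant), so the sum of absolute values of its coefficients in the balanced-monomial basis is bounded by $\alpha + |\mathbb{E}_{U\sim\mu}[f(U)]|$. Writing $|g|^{2m} = g^m \overline{g}^m$ and expanding, each term is a product of $2m$ balanced monomials of degree at most $K$; since complex conjugation preserves balancedness (it merely swaps conjugated and unconjugated indices) and since products of balanced monomials are balanced with additive degree, every term in the expansion is a balanced monomial of degree at most $2mK \leq k$. A triangle inequality over the multinomial expansion then bounds the coefficient sum of $|g|^{2m}$ by $(\alpha + |\mathbb{E}_{U\sim\mu}[f(U)]|)^{2m}$, and applying the $\epsilon$-approximate $k$-design property term-by-term yields
\begin{align*}
\bigl|\mathbb{E}_{U\sim\nu}[|g(U)|^{2m}] - \mathbb{E}_{U\sim\mu}[|g(U)|^{2m}]\bigr|
\leq \frac{\epsilon}{D^k}\bigl(\alpha + |\mathbb{E}_{U\sim\mu}[f(U)]|\bigr)^{2m}.
\end{align*}

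For step (ii), the layer-cake identity combined with the assumed tail bound gives
\begin{align*}
\mathbb{E}_{U\sim\mu}[|g(U)|^{2m}]
= \int_0^\infty 2m\, t^{2m-1} \Pr_{U\sim\mu}(|g(U)| \geq t)\, dt
\leq \int_0^\infty 2m\, t^{2m-1} C e^{-a t^2}\, dt
= \frac{C\, m!}{a^m}
\leq C \left(\frac{m}{a}\right)^m,
\end{align*}
using the standard evaluation $\int_0^\infty t^{2m-1} e^{-at^2} dt = (m-1)!/(2a^m)$ and $m! \leq m^m$. Combining the two displays and dividing by $\delta^{2m}$ produces the claimed bound.

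The only mildly delicate step is the bookkeeping in (i): verifying that $|g|^{2m}$ is genuinely a balanced polynomial of degree $\leq k$ and that its total coefficient absolute sum is controlled by $(\alpha + |\mathbb{E}_{U\sim\mu}[f(U)]|)^{2m}$. Everything else---Markov's inequality, the layer-cake identity, and the Gaussian moment integral---is essentially mechanical, and I do not expect any genuine obstacle beyond careful accounting of the balanced-monomial structure.
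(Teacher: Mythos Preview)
Your proposal is correct and follows exactly the standard Markov-plus-moment-matching argument from Ref.~\cite{low2009large} that the paper cites; the paper itself does not spell out a proof, merely remarking that the original argument uses only the closeness of moments (Lemma~3.4 in \cite{low2009large}) and therefore transfers verbatim from the unitary group to any of $SU$, $SO$, $Sp$. Your sketch reconstructs precisely that argument, and the bookkeeping you flag as ``mildly delicate''---that the constant $\mathbb{E}_\mu[f(U)]$ is a degree-$0$ balanced monomial, that conjugation preserves balancedness, and that the coefficient sum of $|g|^{2m}$ is controlled multiplicatively---is all correct.
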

To simplify notation we suppress ceiling brackets, implicitly rounding to a nearby integer whenever needed.

\paragraph*{Strong state complexity.}
\cite{Brand_o_2021} proposed an operational measure of state complexity that is based on the difficulty of distinguishing the state from the computationally most useless state, the maximally mixed state. It is illustrated in Figure~\ref{fig:strong}.

\begin{figure}[t]
\centering
\begin{tikzpicture}[remember picture]
\node[inner sep=0pt] (circ) {%
\begin{quantikz}[remember picture,
                 row sep={0.55cm,between origins}, column sep=0.32cm,
                 wire types={q,q,q}]
\lstick[3,brackets=none]{}
  & \gate{H}\gategroup[3,steps=5,
       style={rounded corners,fill=blue!10,
              inner xsep=4pt,inner ysep=6pt,alias=Vbox},background,
       label style={label position=above,anchor=south,yshift=-2pt}]{$V$}
  & \ctrl{1} & \qw      & \gate{H} & \ctrl{1}
    & \meter{} \\
  & \gate{H} & \targ{}  & \ctrl{1} & \gate{T} & \targ{}
    & \meter{} \\
  & \gate{H} & \gate{T} & \targ{}  & \gate{H} & \gate{T}
    & \meter{}
\end{quantikz}%
};
\draw[<->] ([yshift=-10pt]Vbox.south west)
        -- node[below=-1pt,font=\scriptsize] {$\mathrm{size}(V)\le r$}
           ([yshift=-10pt]Vbox.south east);
\node[rounded corners=3pt,fill=gray!35,draw,
      minimum width=1.1cm,minimum height=1.4cm,
      align=center,inner sep=2pt,font=\small,anchor=east]
   (src) at (Vbox.west) {$\rho_0$\\or\\$\rho_1$};
\foreach \dy in {-0.55cm,0pt,0.55cm}{%
  \draw ([yshift=\dy]src.east) -- ([yshift=\dy,xshift=8pt]src.east);}
\node[rounded corners=3pt,fill=gray!10,draw,
      minimum width=1.7cm,minimum height=1.4cm,
      align=center,inner sep=2pt,font=\small,anchor=west]
   (gss) at ([xshift=12pt]circ.east|-Vbox.east) {guess\\$\rho_0$ or $\rho_1$};
\foreach \dy in {-0.55cm,0pt,0.55cm}{%
  \draw[double, double distance=1.2pt]
    ([yshift=\dy]gss.west-|circ.east) -- ([yshift=\dy]gss.west);}
\end{tikzpicture}
\caption[Strong state complexity.]{Illustration of the strong state complexity (Definition~\ref{def:state_complexity}).
A referee supplies either $\rho_1=\ketbra{\psi}$ or $\rho_0=\mathbb{I}_D/D$.
The discriminator applies a circuit $V$ of size at most~$r$, drawn from a
universal finite instruction set $\mathsf{G}\subset U(4)$ (here, for
concreteness, $\{H,T,\mathrm{CNOT}\}$), followed by a computational-basis
measurement on each of the $n$ qubits. A binary decision rule applied to
the $n$ classical outcomes yields the guess.}
\label{fig:strong}
\end{figure}

Fix a universal two-qubit gate set $\mathsf{G}\subset U(4)$.
For example, one may take as generators the Hadamard, phase, and CNOT gates together with the $T$ gate.
We assume $\abs{\mathsf{G}}\le n$. Here $\mathsf{G}$ denotes a finite \emph{instruction set} of elementary one- and two-qubit gates to be placed on the $n$ qubits; in particular, it should not be confused with the full $n$-qubit Clifford group.
By the Solovay--Kitaev theorem, switching to a different universal gate set affects compilation cost only by a constant factor~\cite{dawson2005solovaykitaevalgorithm}.
We let $\mathsf{M}_r$ be the set of POVM elements implementable by combining at most $r$ gates from $\mathsf{G}$.
Following the convention of Ref.~\cite{Brand_o_2021}, we state our results for circuits built from one- and two-qubit gates arranged in a 1D architecture. In that setting when estimating the cardinality of $\mathsf{M}_r$,  the number of possible gate placements at each step is $\mathcal{O}(n)$: there are $n$ possible locations for a one-qubit gate and only $n-1$ nearest-neighbour pairs for a two-qubit gate.
However, the same counting arguments extend essentially verbatim to circuits with more general connectivity and gates with higher locality. Indeed, if arbitrary qubits are allowed to interact, then a $k$-local gate can be placed on
$\binom{n}{k}$ positions, so the number of possible gate placements per step increases from $\mathcal{O}(n)$ in the 1D setting with two-local gates to $\mathcal{O}(n^k)$ in the fully nonlocal one with $k$-local gates.
Consequently, factors such as $n^r$ in our bounds are replaced by $n^{kr}$ up to constants, and the corresponding thresholds in $r$ shift accordingly. Thus, relaxing assumption on the connectivity of the circuit and locality of the gates does not qualitatively alter the counting arguments.

\begin{definition}[Strong $\delta$-state complexity,  Definition 1 in Ref.~\cite{Brand_o_2021}]
\label{def:state_complexity}
Fix $\delta\in(0,1)$.
A pure state $\ket{\psi}\in \mathbb{C}^D$ has strong $\delta$-state complexity at most $r$ if
\[
\beta_{\mathrm{qs}}^\sharp (r, \ket{\psi})
\coloneqq
\max_{ M \in \mathsf{M}_r}
\left| \Tr \!\left( M \left(\ketbra{\psi}-\frac{\mathbb{I}_D}{D}\right)\right)\right|
\ge 1-\frac{1}{D}-\delta,
\]
denoted $\mathcal{C}_\delta(\ket{\psi})\le r$.
\end{definition}

This strong definition of state complexity implies other (weaker) definitions, such as the minimal circuit size required to approximate the state. We denote the number of elementary gates required to implement a unitary $V$ by $size(V)$.

\begin{lemma}[Lemma 6 in Ref.~\cite{Brand_o_2021}] \label{lem:state_comp_stronger}
 Suppose that $| \psi \rangle \in \mathbb{C}^D$ obeys $\mathcal{C}_\delta (| \psi \rangle) \geq r+1$ for some $\delta \in (0,1)$ and $r \in \mathbb{N}$.
Then,
\begin{equation}
\min_{\textrm{size}(V) \leq r}
\frac{1}{2} \left\| | \psi \rangle \! \langle \psi| - V |0 \rangle \! \langle 0|^{\otimes n} V^\dag \right\|_1 > \sqrt{\delta}.
\label{eq:simple_state_complexity}
\end{equation}
\end{lemma}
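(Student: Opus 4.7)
The plan is to prove the contrapositive: assuming that some circuit $V$ of size at most $r$ prepares a state close to $\ket{\psi}$ in trace distance, we construct a small measurement that distinguishes $\ket{\psi}$ from the maximally mixed state, contradicting $\mathcal{C}_\delta(\ket{\psi}) \geq r+1$. So first I would suppose, for contradiction, that there exists $V \in \mathsf{G}_r$ with $\frac{1}{2}\||\psi\rangle\langle\psi| - V\ketbra{0}^{\otimes n} V^\dagger\|_1 \leq \sqrt{\delta}$, and aim to deduce $\mathcal{C}_\delta(\ket{\psi}) \leq r$.

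The natural candidate measurement is the two-outcome POVM associated with the projector $M = V\ketbra{0}^{\otimes n} V^\dagger$, which belongs to $\mathsf{M}_r$ since $V$ uses at most $r$ gates. A direct computation gives
\begin{align}
\Tr\!\left(M\!\left(\ketbra{\psi} - \tfrac{\mathbb{I}_D}{D}\right)\right)
= \left|\bra{\psi} V \ket{0}^{\otimes n}\right|^2 - \frac{1}{D},
\end{align}
using $\Tr(M)=1$ for the second term. So the task reduces to lower-bounding the overlap $|\bra{\psi} V \ket{0}^{\otimes n}|^2$.

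For this, I would invoke the pure-state formula in Equation~\eqref{eq:trace-dist}, namely $\tfrac{1}{2}\|\ketbra{\psi}-V\ketbra{0}^{\otimes n}V^\dagger\|_1 = \sqrt{1-|\bra{\psi}V\ket{0}^{\otimes n}|^2}$. Combined with the assumed trace-distance bound $\sqrt{\delta}$, this immediately yields $|\bra{\psi} V \ket{0}^{\otimes n}|^2 \geq 1-\delta$, and hence
\begin{align}
\beta_{\mathrm{qs}}^\sharp(r,\ket{\psi}) \;\geq\; \Tr\!\left(M\!\left(\ketbra{\psi}-\tfrac{\mathbb{I}_D}{D}\right)\right) \;\geq\; 1-\delta-\frac{1}{D},
\end{align}
so $\mathcal{C}_\delta(\ket{\psi}) \leq r$, contradicting the hypothesis $\mathcal{C}_\delta(\ket{\psi}) \geq r+1$.

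There is no real obstacle here: the proof is essentially a one-step reduction that relies on (i) the fact that the projector onto a circuit-prepared state is itself a size-$r$ measurement, and (ii) the pure-state identity relating trace distance to fidelity. The only point requiring mild care is confirming that $V\ketbra{0}^{\otimes n}V^\dagger$ is an admissible element of $\mathsf{M}_r$ in the sense of Definition~\ref{def:state_complexity} (a two-outcome projective POVM implementable by $r$ gates), which follows directly from the definition of $\mathsf{G}_r$ and $\mathsf{M}_r$ introduced earlier in the section.
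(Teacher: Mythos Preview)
The paper does not supply its own proof of this lemma; it is simply quoted as Lemma~6 of Ref.~\cite{Brand_o_2021}. Your contrapositive argument is correct and is exactly the intended one: the projector $M=V\ketbra{0}^{\otimes n}V^\dagger$ lies in $\mathsf{M}_r$, the pure-state identity~\eqref{eq:trace-dist} converts the trace-distance bound $\sqrt{\delta}$ into $|\bra{\psi}V\ket{0}^{\otimes n}|^2\geq 1-\delta$, and this yields $\beta_{\mathrm{qs}}^\sharp(r,\ket{\psi})\geq 1-\tfrac{1}{D}-\delta$, i.e.\ $\mathcal{C}_\delta(\ket{\psi})\leq r$.
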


\paragraph*{Learning in the statistical query model.}
 We consider the problem of learning a \emph{concept class} of distributions $\mathcal{D}$. That is, given an instance $P \in \mathcal{D}$, the algorithm returns another distribution $Q$ which is an $\varepsilon$-approximation of $P$ in total variation distance, i.e.
\begin{align}
    d_{TV}(P,Q) \coloneqq \frac{1}{2}\sum_{x\in\{0,1\}^n} \abs{P(x) - Q(x)}\leq \varepsilon.
\end{align}
Introduced in Ref.~\cite{kearns1998efficient}, the SQ model captures algorithms that use coarse statistical properties of the distribution instead of having access to the i.i.d. samples.
Given a distribution $P$ and tolerance $\tau \geq 0$, the SQ oracle $\mathsf{Stat}_\tau$ receives as input a function $\phi : \{0,1\}^n \rightarrow [-1,1]$ and returns a value $v$ such that
\begin{align}
    \left | \mathbb{E}_{x\sim P} \, [\phi(x)] - v \right| \leq \tau.
\end{align}
The framework is illustrated in Figure~\ref{fig:SQ_learning}.
\smallskip
\noindent
Our choice of the SQ model is motivated both by intrinsic features of the quantum measurement process and, more compellingly, by limitations of near-term quantum devices\ \cite{preskill2018quantum}.
In the quantum-circuit setting, the learner typically interacts with a device by repeatedly preparing a circuit
and measuring a the output state (often in the computational basis), producing i.i.d.\ outcomes $x\sim P$.
Any quantity that can be extracted reliably under finite-shot statistics is therefore a \emph{bounded statistic} of the form
$\mathbb{E}_{x\sim P}[\phi(x)]$ for some post-processing function $\phi:\{0,1\}^n\to[-1,1]$.
With $N$ circuit runs, the empirical average $\frac1N\sum_{i=1}^N \phi(x_i)$ estimates $\mathbb{E}[\phi]$ up to additive
error $\mathcal{O}(1/\sqrt{N})$ with high probability (shot noise), so under a realistic polynomial shot budget one is naturally
restricted to tolerances $\tau \gtrsim 1/\mathrm{poly}(n)$.
Moreover, near-term devices often preclude coherent access to multiple entangled copies, leading to single-copy measurements that can be conveniently modeled in the SQ framework; and experimental noise sources further motivate learning models based
on coarse expectation-value information rather than exact sample access or collective measurements.
This operational viewpoint is closely related to quantum statistical query models studied in the quantum-learning
literature; see, e.g., Refs.~\cite{arunachalam2020,nietner2023average,arunachalam2023,Angrisani2025}.
Equivalently, a $\tau$-accurate SQ can be simulated from $N=O(\tau^{-2}\log(1/\delta))$ i.i.d.\ circuit runs
with failure probability at most $\delta$ via empirical averaging.

\begin{figure}[tb!]
    \centering
    \begin{tikzpicture}[auto]
\node           [rectangle,draw]        (node1)     {Algorithm};
\node [below = 0.5cm of node1] (node2) {$Q $ such that  $ d_{TV}(P_U, Q) \leq \varepsilon$};
\node [above=of node1, rectangle, draw, blue] (input)  { $\texttt{Stat}_{\tau}$}    ;
\draw           [->, thick, blue]             ([xshift=0.1cm]  input.south)  --
                node {$v$ such that  $|\mathbb{E}_{x \sim P_U}\phi(x) -v| \leq \tau$ }                ([xshift=0.1cm]  node1.north);
\draw           [->, thick]             ([xshift=-0.2cm] node1.north) --
                node {$\phi: \{0,1\}^n \to [-1,1]$}                 ([xshift=-0.2cm] input.south);
\draw           [->, thick]      ([xshift=0.1cm]  node1.south)  --
                              ([xshift=0.1cm]  node2.north);
\end{tikzpicture}
\caption{Diagrammatic visualisation of distribution learning in the statistical query framework. The algorithm queries the oracle $\texttt{Stat}_{\tau}$ with a bounded function $\phi$ and obtains a $\tau$-accurate expectation value of $\phi$ under the target distribution $P_U$, which in our context refers to the Born distribution of the state $U\ket{0^n}$. After querying $q$ times, the algorithm returns an $\varepsilon$-approximation of the target distribution.}
\label{fig:SQ_learning}
\end{figure}
We do not require the algorithm to succeed always, but with a certain success probability $\beta$.
A lower bound on the average-case query complexity is provided by the following lemma.

\begin{lemma}[Lemma 1 in Ref.~\cite{nietner2023average}]
\label{lemma:nie}
Given a distribution class $\mathcal{D}= \mathcal{P}_G$ of Born distributions and a fixed distribution $\mathcal{U}$, we define the associated maximally distinguishable fraction as
\begin{align}
    \mathfrak{f} \coloneq \max_{\phi\colon \{0,1\}^n \rightarrow [-1,1] } \Pr_{U \sim \mu_G}(| \mathbb{E}_{x \sim P_U}\phi(x) -  \mathbb{E}_{x \sim \mathcal{U}}\phi(x) | \geq \tau ),
\end{align}
and the probability of trivial learning as
\begin{align}
    \mathfrak{u} \coloneq \Pr_{U \sim \mu_G}\left[ d_{TV}(P_U,\mathcal{U}) \leq \varepsilon + \tau\right].
\end{align}
Then $\varepsilon$-learning a $\beta$ fraction of $\mathcal{D}$ with respect to $\mu_G$ requires at least  $q$ many $\tau$-accurate statistical queries with
\begin{align}
    q+1 \geq \frac{\beta - \mathfrak{u} }{\mathfrak{f} }.
\end{align}
\end{lemma}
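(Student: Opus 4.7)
The overall approach is the standard \emph{simulation argument} for SQ lower bounds. I would fix any SQ learner $\mathcal{L}$ making $q$ queries of tolerance $\tau$ that succeeds with probability at least $\beta$ over $U\sim\mu_U$, meaning its output $Q$ satisfies $d_{TV}(Q,P_U)\leq\varepsilon$. The plan is to show that $\beta$ cannot exceed $\mathfrak{u}+(q+1)\mathfrak{f}$, from which the claimed inequality follows by rearrangement.

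The central construction is the \emph{trivial simulation}: replace the true statistical oracle by a fake oracle that answers every query $\phi$ with $\mathbb{E}_{x\sim\mathcal{U}}[\phi(x)]$. Since this oracle depends only on the reference distribution $\mathcal{U}$, running $\mathcal{L}$ against it produces a deterministic sequence of queries $\phi_1^{*},\dots,\phi_q^{*}$ and a deterministic output distribution $Q^{*}$, all independent of $U$. Whenever the event $A=\bigcap_{i=1}^{q}\{|\mathbb{E}_{P_U}\phi_i^{*}-\mathbb{E}_\mathcal{U}\phi_i^{*}|\leq\tau\}$ holds, the fake answers are valid SQ responses, so the actual learner on $P_U$ would produce precisely $Q^{*}$; success then forces $d_{TV}(Q^{*},P_U)\leq\varepsilon$. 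A union bound together with the definition of $\mathfrak{f}$ yields $\Pr[\neg A]\leq q\,\mathfrak{f}$.

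To account for the extra $+1$ in the query count, I would introduce one final auxiliary test function $\phi_{q+1}^{*}(x)=\mathrm{sign}(Q^{*}(x)-\mathcal{U}(x))$, which is again a fixed, $U$-independent function in $[-1,1]$. A few lines of triangle inequality show that if $\phi_{q+1}^{*}$ fails to distinguish $P_U$ from $\mathcal{U}$ within tolerance $\tau$ while simultaneously $d_{TV}(Q^{*},P_U)\leq\varepsilon$, then $\sum_x|P_U(x)-\mathcal{U}(x)|\leq\varepsilon+\tau$, placing $P_U$ inside the trivially learnable set counted by $\mathfrak{u}$. Conversely, the event that $\phi_{q+1}^{*}$ does distinguish contributes at most an additional $\mathfrak{f}$. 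Putting everything together gives $\Pr[\text{success}]\leq\mathfrak{u}+(q+1)\mathfrak{f}$, which rearranges to $q+1\geq(\beta-\mathfrak{u})/\mathfrak{f}$.

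The principal technical subtlety is the adaptivity of $\mathcal{L}$: because in a real run the $i$-th query depends on all earlier oracle responses, one must argue that the $\phi_i^{*}$ appearing in the \emph{trivial} simulation are indeed fixed functions of the learner's coins alone (which follows because the trivial oracle's responses are $U$-independent) before $\mathfrak{f}$ may be applied pointwise through the union bound. A secondary issue is randomization of $\mathcal{L}$, which can be absorbed by conditioning on the coins that maximize the success probability or by a standard averaging argument. I expect the cleanest part to be the concentration step hidden inside $\mathfrak{f}$, and the trickiest piece to be the book-keeping between the $d_{TV}$ and $\ell_1$ conventions in the triangle-inequality estimate that produces $\mathfrak{u}$, but this is a routine algebraic calculation.
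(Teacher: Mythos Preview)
The paper does not supply its own proof of this lemma; it is quoted verbatim as ``Lemma 1 in Ref.~\cite{nietner2023average}'' and used as a black box in the proof of Theorem~\ref{th:average_case_hardness}. Your simulation argument---run the learner against a fake oracle that always answers with $\mathbb{E}_{x\sim\mathcal{U}}[\phi(x)]$, observe that the resulting queries $\phi_1^{*},\dots,\phi_q^{*}$ and output $Q^{*}$ are $U$-independent, union-bound the failure of the simulation by $q\,\mathfrak f$, and spend one extra ``virtual'' query on the sign witness $\phi_{q+1}^{*}=\mathrm{sign}(Q^{*}-\mathcal{U})$ to force the trivially-learnable event---is exactly the standard proof of this result and matches the argument in the cited reference.

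One small caution: the triangle-inequality step you sketch gives
\[
\|P_U-\mathcal{U}\|_1 \;\le\; \|P_U-Q^{*}\|_1 + \|Q^{*}-\mathcal{U}\|_1
\;\le\; 2\,\|P_U-Q^{*}\|_1 + \bigl|\mathbb{E}_{P_U}\phi_{q+1}^{*}-\mathbb{E}_{\mathcal{U}}\phi_{q+1}^{*}\bigr|,
\]
so the constant in front of $\varepsilon$ depends on whether ``$\varepsilon$-learning'' is defined via $d_{TV}$ or via the full $\ell_1$ distance. You already flagged this; just be aware that with the paper's convention $d_{TV}=\tfrac12\|\cdot\|_1$ one would naively land on $2\varepsilon+\tau$ rather than $\varepsilon+\tau$ in the definition of $\mathfrak u$. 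This is a bookkeeping issue in how the lemma is transcribed here, not a flaw in your argument.
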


\section{Main theorems}
\label{sec:main-results}
Now we are ready to derive our main results.

\paragraph*{High strong complexity and geometric separation.}
Our first result is on the strong state complexity for states evolved under unitaries in the classical compact groups.
The results have significance in the context of pseudorandomness and chaos~\cite{Roberts_2017}. They are consistent with the fact that the corresponding random matrix distributions serve as effective models of quantum chaos~\cite{forkel2022classicalcompactgroupsgaussian}.

\smallskip
Before stating the main bound, let us briefly explain how to read it.
We draw a random state $\ket{\psi}$ from the Haar-induced ensemble associated with $G\in\{SU(D),SO(D),Sp(D/2)\}$, and ask
whether $\ket{\psi}$ can be distinguished from the maximally mixed state using measurements implementable with at most $r$
elementary gates from a fixed finite instruction set.
The parameter $\delta\in(0,1)$ specifies the distinguishing advantage required from the measurement family $\mathsf{M}_r$,
while $r$ is the elementary-gate budget used to implement a POVM element.
Theorem~\ref{thm:high_complexity} upper bounds the probability (over the random draw of $\ket{\psi}$) that \emph{there exists} an
$r$-gate measurement that achieves advantage at least $\delta$; thus, a small right-hand side should be read as:
with overwhelming probability, \emph{no} such $r$-gate measurement can distinguish $\ket{\psi}$ from maximally mixed.
For approximate design ensembles, the bound additionally makes explicit how the design order $k$ controls the regime in which
Haar-like complexity persists.

\begin{theorem}[High strong state complexity]
\label{thm:high_complexity}
\medskip
\noindent\textbf{(i)}
Fix $\delta\in(0,1)$. Then
\begin{align}
\Pr_{\ket{\psi}\sim \mathcal{S}_G}\!\left[\mathcal{C}_\delta(\ket{\psi})\le r\right]
\;\le\;
5D (n+1)^r |\mathsf{G}|^r \exp\!\left(-\frac{(D-2)(1-\delta)^2}{32}\right).
\label{eq:thm_high_complexity_prob}
\end{align}
In particular, until $r \simeq \frac{D}{\log n}$ this probability is exponentially small in $D$.

\medskip
\noindent\textbf{(ii)}
Assume
$\delta \in (0,1-\frac{1}{D}-c)$ for some $c>0$, $k>3$ and $\varepsilon \le 2^{-5n-\frac{2k}{3}\log(\frac{5}{4c})}$.

\begin{align}
    \mathrm{Pr}_{\ket{\psi} \sim \mathcal{S}^{(k,\varepsilon)}_G} \left[ \mathcal{C}_\delta (| \psi \rangle) \leq r \right]
\leq
2 (1-D^{-1}-\delta)^{-\frac{2k}{3}} D (n+1)^r\abs{\mathsf{G}}^r \left( 2
 \left(\frac{32 k}{3(D-2)}\right)^{\frac{k}{3}}+ \varepsilon \left( 1+ \frac{1}{D^{\frac{3}{2}}}\right)^{\frac{2k}{3}}
\right)
\label{eq:design_bounds}
\end{align}
In particular, this bound is polynomially suppressed in $D$ until 
$r \simeq \frac{n}{\log n}$.
\end{theorem}

We illustrate the bound in Equation~\eqref{eq:design_bounds} in Figure~\ref{fig:design_plots_prob}. The restriction on the parameters is necessary to ensure that the probability is bounded non-trivially.
 Reference \cite{schuster2025randomunitariesextremelylow} derived that the circuit depth to achieve  $\varepsilon$-approximate unitary $k$-designs is $\mathcal{O}\left(\log(\frac{n}{\varepsilon})\cdot k \, \text{poly}(\log k )\right)$.
Assuming $\mathcal{O}(n)$ gates per layer, this leads to a gate complexity $r =\mathcal{O}\left( n \log(\frac{n}{\varepsilon})\cdot k \text{poly}(\log k )
\right) $. For $\varepsilon$ exponentially small in $n$ and $k$, as assumed in Theorem \ref{thm:high_complexity} (ii), this results in $r =\mathcal{O}\left( n^2 \log(n)\cdot k^2 \text{poly}(\log k )
\right)$ which is still larger than our derived gate complexity.

While our restriction in $\varepsilon$ is significant it contributes a linear in system size factor for depth required to achieve a $k$-design.
We note that our bound in Equation~\eqref{eq:design_bounds} is consistent with the bound in Ref.~\cite[A10]{Brand_o_2021} for the unitary case that leads to the same statement with $r \simeq \frac{k(n-2\log k)}{\log n}$ yet used another proof method.
Moreover, these high complexity states do not occur clustered but are almost orthogonal to each other. This aligns with the fact that there are doubly exponentially many states that are almost orthogonal~\cite[Appendix B]{Roberts_2017}.

\begin{figure}
    \centering
    \includegraphics[width=0.7\linewidth]{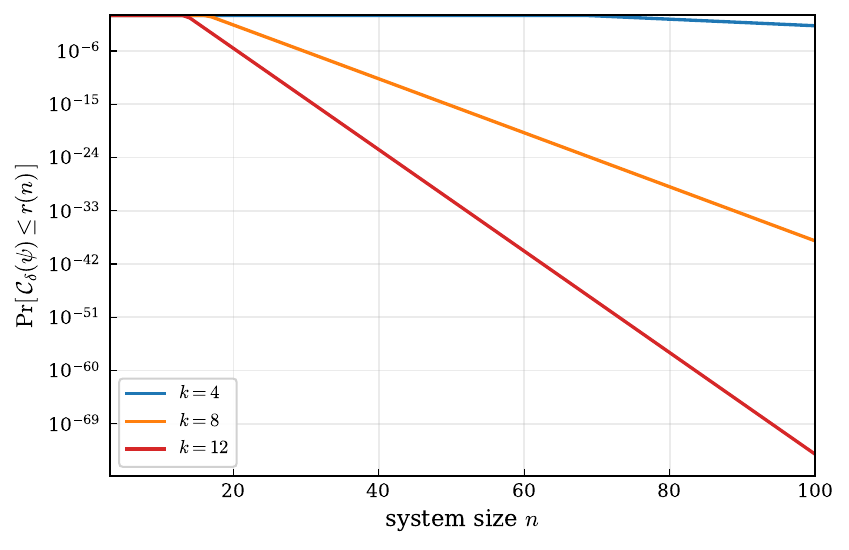}
\caption{Probability upper bound in Theorem~\ref{thm:high_complexity}\,(ii)
versus system size~$n$ (logarithmic $y$-axis, clipped at~$1$).
We set $|\mathsf{G}|=4$, $r(n)=\frac{n}{12\ln n}$,
$\delta=\tfrac{1}{2}$, $c=\tfrac{1}{2}-\tfrac{1}{D}$, 
$\varepsilon=2^{-5n-\frac{2k}{3}\log\!\left(\frac{5}{4c}\right)}$,
and $k\in\{4,8,12\}$.}
    \label{fig:design_plots_prob}
\end{figure}

\smallskip
The next result strengthens the preceding typicality statement by showing that high complexity is compatible with strong
geometric separation.
Rather than considering a single random state, we ask how many states can be selected so that \emph{each} has strong state
complexity at least $r$ and \emph{any pair} is almost maximally far apart in trace distance.
Theorem~\ref{thm:high_complexity_separation} shows that for Haar-induced ensembles one can pack doubly exponentially many such states, while for
approximate design ensembles the achievable packing size is controlled explicitly by the design parameters and the separation
threshold~$\Delta$.

\begin{theorem}[Separation of highly complex states]
\label{thm:high_complexity_separation}

For each of the ensembles below, there exist $N$ pure states
$\ket{\psi_1}, \ldots, \ket{\psi_N}$ in the corresponding ensemble such that
$\mathcal{C}_\delta(\ket{\psi_i}) \ge r$ for all $i$ and their pairwise trace distance satisfies
\begin{align}
\forall i \neq j,\quad \frac{1}{2} \|\ketbra{\psi_i}{\psi_i} - \ketbra{\psi_j}{\psi_j}\|_1 \geq 1 - \Delta.
\label{eq:pairwise_trace_distance}
\end{align}

\medskip
\noindent\textbf{(i) $\mathcal{S}_G$.
}
Let $\delta \in (0,1), \Delta \in (\frac{1}{D}, 1)$ and  $r\lesssim \frac{D}{\log n}$. Then
\begin{align}
N&= \left\lfloor \frac{1}{6}\exp\!\left(\frac{D\Delta^2}{32}\right)\right\rfloor.
\end{align}

\medskip
\noindent\textbf{(ii)
$\mathcal{S}^{(k,\varepsilon)}_G$}.
Let $\delta \in (0,1-\frac{1}{D}-c)$ for $c>0$, $\Delta \geq \frac{1}{D} + 4 \sqrt{\frac{k}{D-2}}$, $k>3$, $\varepsilon \le \min\!\left(2^{-5n-\frac{2k}{3}\log(\frac{5}{4c})},\, 2^{-\frac{nk}{2}-1}\right)$ and $r \lesssim \frac{ n}{\log n}$ (the first $\varepsilon$-bound is inherited from Theorem~\ref{thm:high_complexity}\,(ii), whose hypothesis is used in the proof).
Define
\begin{align}
p_{k,\varepsilon}(\Delta)
&\coloneqq \left(\Delta-\frac{1}{D}\right)^{-k}
\left(
2\left(\frac{16k}{D-2}\right)^{k/2} + 2^k\varepsilon
\right).
\end{align}
Then,
\begin{align}
N&= \left\lfloor \frac{1}{2\,p_{k,\varepsilon}(\Delta)} \right\rfloor.
\end{align}
\end{theorem}

We illustrate the number of geometrically separated pure states drawn from approximate designs in Figure~\ref{fig:design_plots}.
The proof is analogous to the one provided in Ref.~\cite[Appendix A]{Brand_o_2021} and based on Theorem
\ref{th:levi}.
It is presented in the appendix in Sections \ref{sec:proof_high_complexity} and \ref{sec:proof_high_complexity_seperation}.

\begin{figure}[t]
  \centering
  \includegraphics[width=0.49\linewidth]{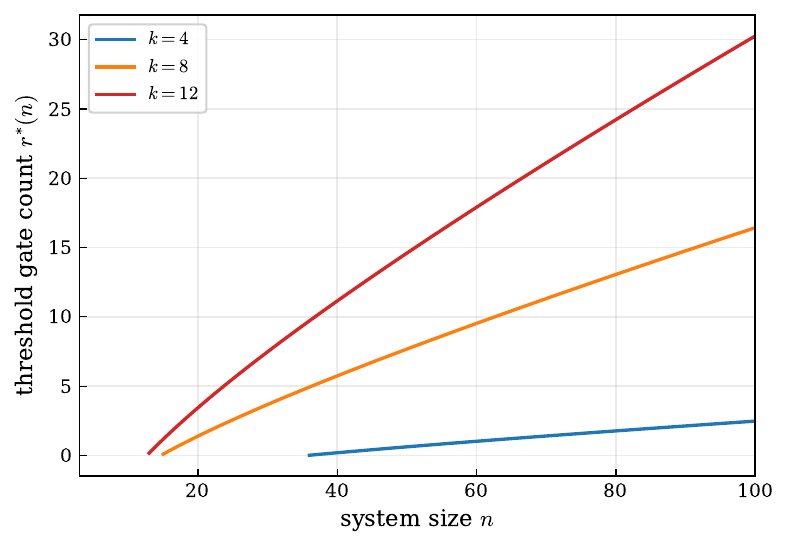}\hfill
  \includegraphics[width=0.49\linewidth]{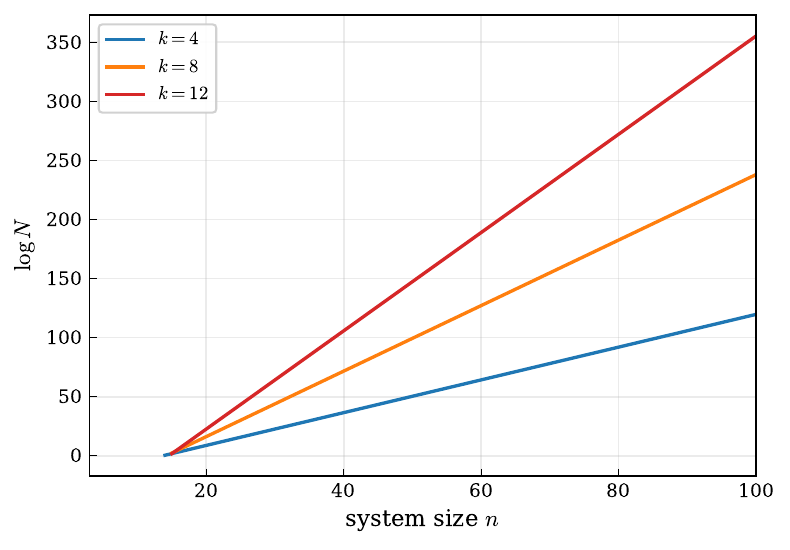}

\caption{Scaling of $k$-design bounds for $k\in\{4,8,12\}$ with
$\delta=\tfrac{1}{2}$ and $c=\tfrac{1}{2}-\tfrac{1}{D}$.
\emph{Left:} threshold gate number~$r^*$ where the
Theorem~\ref{thm:high_complexity}\,(ii) bound on
$\Pr[\mathcal{C}_\delta(\psi)\le r]$ crosses~$1$, with $\varepsilon=2^{-5n-\frac{2k}{3}\log\!\left(\frac{5}{4c}\right)}$;
we assumed $|\mathsf{G}|=4$.
\emph{Right:} $\log N$ from
Theorem~\ref{thm:high_complexity_separation}\,(ii) at fixed
separation $\Delta=0.1$, with $\varepsilon=\min\!\left(2^{-5n-\frac{2k}{3}\log\!\left(\frac{5}{4c}\right)},\, 2^{-nk/2-1}\right)$.}
\label{fig:design_plots}
\end{figure}

\paragraph*{Gaussian integration.}

To leverage Lemma \ref{lemma:nie} for the analysis of the average-case complexity of learning the output distribution of circuits from the classical compact groups, we need to bound the average total variation distance to the uniform distribution.
To analyze moments of functions beyond simple polynomials in the entries of unitaries, we extend the result of~\cite[Theorem 17]{nietner2023average}, originally formulated for the unitary group, to cover all classical compact groups. Specifically, we consider general homogeneous functions of even degree and show that their expectation values over these groups can be reformulated as expectation values with respect to standard independent Gaussian random variables.
We compare the approach for the orthogonal group and the full unitary group in Table~\ref{tab:Gaussian_integration}.
 As $\mathcal{S}_{Sp}$ is an exact unitary state design for all $k$ by~\cite[Theorem 1]{west2024randomensemblessymplecticunitary}, the same formula as in the unitary case holds for the symplectic group.
\begin{table}[htbp]
\centering
\renewcommand{\arraystretch}{1}
\begin{tabular}{|c|c|c|}
\hline
\small \textbf{Group $G$}  & \small \textbf{Haar-random state distribution} & \small \textbf{Gaussian construction} \\
\hline
\small $SO(D)$
& \small Uniform over the real unit sphere $S^{D-1}$
& \small  $g_i \sim \mathcal{N}(0,1)$\\
\hline
\small $U(D)$
& \small Uniform over the complex unit sphere in $\mathbb{C}^D$
& \small $g_i \sim \mathcal{N}(0,1) + i\mathcal{N}(0,1)$\\
\hline
\end{tabular}
\caption{Comparison of Haar-random state generation across classical compact groups.
Each Haar distribution can be realized by normalizing Gaussian vectors over the corresponding field,
and suitable normalization factors given in Theorems~\ref{th:Gaussian_SO} and~\cite[Theorem 17]{nietner2023average}, which convert Gaussian expectations into Haar expectations for homogeneous functions of even degree.}
\label{tab:Gaussian_integration}
\end{table}

\begin{theorem}\label{th:Gaussian_SO}
Let $f : \mathbb{R}^D \to \mathbb{R}$ be homogeneous of even degree $2k$ and let $\mu_{SO}$ denote the Haar-induced measure on real pure states.
Then
\begin{align}
\mathbb{E}_{\ket{\psi}\sim \mu_{SO}}[ f(\ket{\psi}) ]
=
\frac{\mathbb{E}_{g\sim \mathcal{N}(0,\mathbb{I}_D)}[f(g)]}{k!\,2^k\binom{D/2+k-1}{k}}.
\end{align}
\end{theorem}
By a slight abuse of notation, when we write $f(\ket{\psi})$, we mean $f$ is applied to the amplitudes of the state $\ket{\psi} = \sum_{i=1}^D a_i \ket{i}$ in the computational basis. Thus $f(\ket{\psi}) = f(a)$, where $a = (a_1,\dots,a_D) \in \mathbb{R}^D$.
The left-hand side is the Haar average over pure real states, i.e. vectors uniformly distributed on the real unit sphere. On the right-hand side, this same average is rewritten as an expectation over $D$ independent standard Gaussian random variables $g_i \sim \mathcal{N}(0,1)$. Because the direction of a Gaussian vector is uniformly distributed on the sphere and independent of its radius, normalizing such a Gaussian vector reproduces the Haar distribution. The prefactor $\tfrac{1}{k! 2^k \binom{D/2 + k - 1}{k}}$ comes from integrating out the radial part. The proof is in the appendix in section \ref{sec:proof_Gaussian_SO}. With this technical tool we can bound the expected total variation distance
between the Born distribution and the uniform distribution.
This introduces constants $M_G$ and $\Delta_G$ such that
\begin{align}
    M_G-\Delta_G\le \mathbb{E}_{U\sim\mu_G}\, d_{TV}(P_U,\mathcal{U}) \le M_G+\Delta_G.
\end{align}
In particular, $M_{SO}=\sqrt{\tfrac{2}{\pi e}}$ and $M_{SU}= M_{Sp}=\tfrac{1}{e}$, while $\Delta_G=O(D^{-1/2})$. The constants are derived in the appendix in Section \ref{sec:proof_constants}.

\paragraph*{Average-case hardness of learning Born distributions.}

Having established (i) concentration of the Born distribution $P_U$ around an explicit reference distribution for $U$ drawn from $G\in\{SU(D),SO(D),Sp(D/2)\}$ and (ii) the statistical query (SQ) framework for noisy access to $P_U$,
we can now combine these ingredients with Lemma~\ref{lemma:nie} to obtain an average-case hardness result for learning $\mathcal{P}_G$.

\smallskip
We now turn from state distinguishability to the task of learning the corresponding Born output distributions.
The SQ learner interacts with the target distribution $P_U$ only through $\tau$-accurate estimates of expectations
$\mathbb{E}_{x\sim P_U}[\phi(x)]$ of bounded test functions $\phi:\{0,1\}^n\to[-1,1]$, which models finite-shot access and
measurement noise.
Our lower bound combines (i) the fact that a typical $P_U$ is far from uniform in total variation distance on average, and
(ii) concentration-of-measure bounds controlling the fraction of circuits for which this distance is atypically small,
together with the general SQ template of Lemma~\ref{lemma:nie}.
Theorem~\ref{th:average_case_hardness} gives the resulting average-case query-complexity lower bound, expressed in terms of
the tolerance $\tau$, the learning accuracy $\varepsilon$, and the parameter $\xi_G:=M_G-\Delta_G-(\varepsilon+\tau)$.
\begin{theorem}\label{th:average_case_hardness}
Fix $\varepsilon,\tau\ge 0$ and set $\xi_G \coloneqq M_G-\Delta_G-(\varepsilon+\tau)\ge 0$, where $(M_G,\Delta_G)$ are defined as above.
Any algorithm that $\varepsilon$-learns a $\beta$-fraction of Born distributions $\mathcal{P}_G$ with $\tau$-accurate statistical queries must use $q$ queries satisfying
\begin{align}
q+1 \;\ge\;
\begin{cases}
\displaystyle
\frac{\beta - 2 \exp\!\left(- \frac{(2^{n}-2)\,\xi_{SO}^2}{8}\right)}
     {2 \exp\!\left(-\frac{ (2^n -2)\,\tau^2}{32}\right)}
&\text{for } G=SO(D),
\\[1.25em]
\displaystyle
\frac{\beta - 2 \exp\!\left(- \frac{(2^{n-1}+1)\,\xi_{SU}^2}{2}\right)}
     {2 \exp\!\left(-\frac{ (2^{n-1} +1)\,\tau^2}{8}\right)}
&\text{for } G=Sp(D/2),
\\[1.25em]
\displaystyle
\frac{\beta - 2 \exp\!\left(- \frac{2^n \,\xi_{SU}^2 }{4}\right)}
     {2 \exp\!\left(- \frac{2^n \,\tau^2 }{16}\right)}
&\text{for } G=SU(D).
\end{cases}
\end{align}
\end{theorem}
The complete proof can be found in the appendix in Section \ref{sec:proof_hardness}.
The denominator of the query complexity lower bound is still doubly exponential in the number of qubits for $\tau^2 \geq 2^{-\frac{n}{2}}$.
The numerator is still exponentially close to one for $\xi_G^2 \geq 2^{- \frac{n}{2}}$,  $\beta = 2^{-2^{\Omega(n)}}$.
This restricts the accuracy to $\varepsilon \leq M_G - \Delta_G - 2\tau$.
That is for any $\tau^2 \geq 2^{-\frac{n}{2}}, \varepsilon \leq M_G- \Delta_G - 2\tau$ and $\beta = 2^{-2^{\Omega(n)}}$
learning the output distribution of random quantum circuits over the classical connected compact groups requires at least doubly exponentially many queries $q = 2^{2^{\Omega(n)}}$. We illustrate the scaling in Figure~\ref{fig:SQ}.
\begin{figure}
    \centering
    \includegraphics[width=0.5\linewidth]{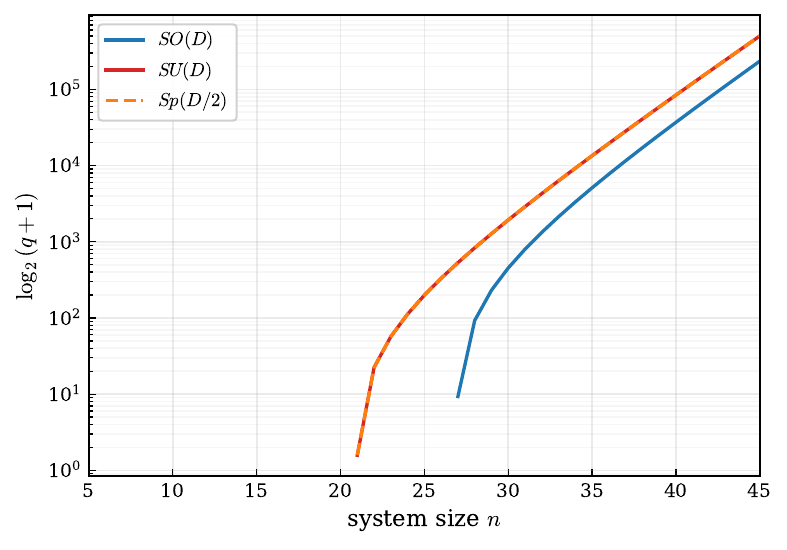}
    \caption{Lower bounds on the required number of statistical queries, plotted as $\log(q+1)$ versus $n$, for $G\in\{SO(D),SU(D),Sp(D/2)\}$,  $\tau=2^{-n/4}$, $\varepsilon=M_G-\Delta_G-2\tau$ (hence $\xi_G=\tau$), and $\beta(n)=2^{-2^{n/3}}$.}
    \label{fig:SQ}
\end{figure}
\section{Discussion and future work}\label{sec:discussion}
In this work, we discussed the complexity of states and unitaries
sampled from structured subgroups of the unitary group.
By applying Lévy's Lemma, we showed that almost all states evolved under unitaries drawn from the classical compact groups exhibit high circuit complexity. For unitary designs of order $k > 3$ over these groups, we derived lower bounds on the circuit depth required to generate them. This opens several directions for future research, such as
studying the growth of complexity under restricted, non-universal gate sets, as in Ref.~\cite{Haferkamp_2022}. Furthermore, again leveraging Lévy's Lemma, one could show that state ensembles generated by the classical compact groups exhibit thermalization behavior, following the framework of Ref.~\cite{Hayden_2006, Popescu_2006}. In addition, it would be interesting to explore the hardness of random circuits that are far from approximate $k$-designs.
Another interesting direction is to analyse the interplay of noise and randomness from restricted groups. We expect unital noise to give even stronger concentration phenomena, but we leave the analysis to future works. 
We have contributed to clarifying the similarities among the classical compact groups. While Ref.~\cite{grevink2025glueshortdepthdesignsunitary, west2025nogotheoremssublineardepthgroup} emphasized their differences in the construction of designs, our results highlight structural features they share. A natural direction for future work is to provide a unified explanation of both their commonalities and divergences, thereby deepening the understanding of why these groups behave alike in some aspects yet differ in others.
\section*{Acknowledgements}
  OS thanks the EPFL Center for Quantum Science and Engineering for supporting this work in its initial state through the INSPIRE Quantum Master Award.
  ZH and AA acknowledge support from the Sandoz Family Foundation-Monique de Meuron program for Academic Promotion.
  The authors thank Sreejith Sreekumar, Mario Berta, Alexander Nietner, Marios Ioannou and Marcel Hinsche for fruitful discussions.
\bibliography{arxiv1}

\begin{thebibliography}{68}
\providecommand{\natexlab}[1]{#1}
\providecommand{\url}[1]{\texttt{#1}}
\expandafter\ifx\csname urlstyle\endcsname\relax
  \providecommand{\doi}[1]{doi: #1}\else
  \providecommand{\doi}{doi: \begingroup \urlstyle{rm}\Url}\fi

\bibitem[Elben et~al.(2022)Elben, Flammia, Huang, Kueng, Preskill, Vermersch,
  and Zoller]{Elben_2022}
Andreas Elben, Steven~T. Flammia, Hsin-Yuan Huang, Richard Kueng, John
  Preskill, Benoît Vermersch, and Peter Zoller.
\newblock The randomized measurement toolbox.
\newblock \emph{Nature Reviews Physics}, 5\penalty0 (1):\penalty0 9–24,
  December 2022.
\newblock ISSN 2522-5820.
\newblock \doi{10.1038/s42254-022-00535-2}.
\newblock URL \url{http://dx.doi.org/10.1038/s42254-022-00535-2}.

\bibitem[Huang(2022)]{classical_shadows}
Hsin-Yuan Huang.
\newblock Learning quantum states from their classical shadows.
\newblock \emph{Nature Reviews Physics}, 4, 01 2022.
\newblock \doi{10.1038/s42254-021-00411-5}.

\bibitem[Movassagh(2023)]{random_q_c}
Ramis Movassagh.
\newblock The hardness of random quantum circuits.
\newblock \emph{Nature Physics}, 19:\penalty0 1--6, 07 2023.
\newblock \doi{10.1038/s41567-023-02131-2}.

\bibitem[Bouland et~al.(2019)Bouland, Fefferman, Nirkhe, and
  Vazirani]{bouland2019quantum}
Adam Bouland, Bill Fefferman, Chinmay Nirkhe, and Umesh Vazirani.
\newblock Quantum supremacy and the complexity of random circuit sampling.
\newblock In \emph{10th Innovations in Theoretical Computer Science Conference
  (ITCS 2019)}, volume 124, page~15. Schloss Dagstuhl--Leibniz-Zentrum fuer
  Informatik, 2019.

\bibitem[Ananth et~al.(2022)Ananth, Qian, and
  Yuen]{ananth2022cryptographypseudorandomquantumstates}
Prabhanjan Ananth, Luowen Qian, and Henry Yuen.
\newblock Cryptography from pseudorandom quantum states, 2022.
\newblock URL \url{https://arxiv.org/abs/2112.10020}.

\bibitem[Fisher et~al.(2023)Fisher, Khemani, Nahum, and Vijay]{Fisher_2023}
Matthew~P.A. Fisher, Vedika Khemani, Adam Nahum, and Sagar Vijay.
\newblock Random quantum circuits.
\newblock \emph{Annual Review of Condensed Matter Physics}, 14\penalty0
  (1):\penalty0 335–379, March 2023.
\newblock ISSN 1947-5462.
\newblock \doi{10.1146/annurev-conmatphys-031720-030658}.
\newblock URL \url{http://dx.doi.org/10.1146/annurev-conmatphys-031720-030658}.

\bibitem[Hayden and Preskill(2007)]{Hayden_2007}
Patrick Hayden and John Preskill.
\newblock Black holes as mirrors: quantum information in random subsystems.
\newblock \emph{Journal of High Energy Physics}, 2007\penalty0 (09):\penalty0
  120–120, September 2007.
\newblock ISSN 1029-8479.
\newblock \doi{10.1088/1126-6708/2007/09/120}.
\newblock URL \url{http://dx.doi.org/10.1088/1126-6708/2007/09/120}.

\bibitem[Cerezo et~al.(2021)Cerezo, Arrasmith, Babbush, Benjamin, Endo, Fujii,
  McClean, Mitarai, Yuan, Cincio, et~al.]{cerezo2021variational}
Marco Cerezo, Andrew Arrasmith, Ryan Babbush, Simon~C Benjamin, Suguru Endo,
  Keisuke Fujii, Jarrod~R McClean, Kosuke Mitarai, Xiao Yuan, Lukasz Cincio,
  et~al.
\newblock Variational quantum algorithms.
\newblock \emph{Nature Reviews Physics}, 3\penalty0 (9):\penalty0 625--644,
  2021.

\bibitem[Larocca et~al.(2024)Larocca, Thanasilp, Wang, Sharma, Biamonte, Coles,
  Cincio, McClean, Holmes, and
  Cerezo]{larocca2024reviewbarrenplateausvariational}
Martin Larocca, Supanut Thanasilp, Samson Wang, Kunal Sharma, Jacob Biamonte,
  Patrick~J. Coles, Lukasz Cincio, Jarrod~R. McClean, Zoë Holmes, and
  M.~Cerezo.
\newblock A review of barren plateaus in variational quantum computing, 2024.
\newblock URL \url{https://arxiv.org/abs/2405.00781}.

\bibitem[Mele(2024)]{mele2024introduction}
Antonio~Anna Mele.
\newblock Introduction to haar measure tools in quantum information: A
  beginner's tutorial.
\newblock \emph{Quantum}, 8:\penalty0 1340, 2024.

\bibitem[Hashagen et~al.(2018)Hashagen, Flammia, Gross, and
  Wallman]{Hashagen2018realrandomized}
A.~K. Hashagen, S.~T. Flammia, D.~Gross, and J.~J. Wallman.
\newblock Real {R}andomized {B}enchmarking.
\newblock \emph{{Quantum}}, 2:\penalty0 85, August 2018.
\newblock ISSN 2521-327X.
\newblock \doi{10.22331/q-2018-08-22-85}.
\newblock URL \url{https://doi.org/10.22331/q-2018-08-22-85}.

\bibitem[West et~al.(2024{\natexlab{a}})West, Mele, Larocca, and
  Cerezo]{west2024realclassicalshadows}
Maxwell West, Antonio~Anna Mele, Martin Larocca, and M.~Cerezo.
\newblock Real classical shadows, 2024{\natexlab{a}}.
\newblock URL \url{https://arxiv.org/abs/2410.23481}.

\bibitem[West et~al.(2026)West, Sauvage, Sen, Forestano, Wierichs, Killoran,
  Grinko, Cerezo, and Larocca]{west2026classicalshadowsarbitrarygroup}
Maxwell West, Frederic Sauvage, Aniruddha Sen, Roy Forestano, David Wierichs,
  Nathan Killoran, Dmitry Grinko, M.~Cerezo, and Martin Larocca.
\newblock Classical shadows with arbitrary group representations, 2026.
\newblock URL \url{https://arxiv.org/abs/2604.01429}.

\bibitem[Nakata et~al.(2021)Nakata, Zhao, Okuda, Bannai, Suzuki, Tamiya, Heya,
  Yan, Zuo, Tamate, Tabuchi, and Nakamura]{Nakata_2021}
Yoshifumi Nakata, Da~Zhao, Takayuki Okuda, Eiichi Bannai, Yasunari Suzuki,
  Shiro Tamiya, Kentaro Heya, Zhiguang Yan, Kun Zuo, Shuhei Tamate, Yutaka
  Tabuchi, and Yasunobu Nakamura.
\newblock Quantum circuits for exact unitary {$t$}-designs and applications to
  higher-order randomized benchmarking.
\newblock \emph{PRX Quantum}, 2\penalty0 (3):\penalty0 030339, September 2021.
\newblock ISSN 2691-3399.
\newblock \doi{10.1103/PRXQuantum.2.030339}.
\newblock URL \url{https://doi.org/10.1103/PRXQuantum.2.030339}.

\bibitem[Brakerski and
  Magrafta(2024)]{brakerski2024realvaluedsomewhatpseudorandomunitaries}
Zvika Brakerski and Nir Magrafta.
\newblock Real-valued somewhat-pseudorandom unitaries, 2024.
\newblock URL \url{https://arxiv.org/abs/2403.16704}.

\bibitem[Rudolph and Grover(2002)]{rudolph20022rebitgateuniversal}
Terry Rudolph and Lov Grover.
\newblock A 2 rebit gate universal for quantum computing, 2002.
\newblock URL \url{https://arxiv.org/abs/quant-ph/0210187}.

\bibitem[Sauliere et~al.(2025)Sauliere, Magni, Lami, Turkeshi, and
  De~Nardis]{sauliere2025universality}
Arman Sauliere, Beatrice Magni, Guglielmo Lami, Xhek Turkeshi, and Jacopo
  De~Nardis.
\newblock Universality in the anticoncentration of chaotic quantum circuits.
\newblock \emph{arXiv preprint arXiv:2503.00119}, 2025.
\newblock \doi{10.48550/arXiv.2503.00119}.
\newblock URL \url{https://arxiv.org/abs/2503.00119}.

\bibitem[Grevink et~al.(2025)Grevink, Haferkamp, Heinrich, Helsen, Hinsche,
  Schuster, and Zimborás]{grevink2025glueshortdepthdesignsunitary}
Lorenzo Grevink, Jonas Haferkamp, Markus Heinrich, Jonas Helsen, Marcel
  Hinsche, Thomas Schuster, and Zoltán Zimborás.
\newblock Will it glue? on short-depth designs beyond the unitary group, 2025.
\newblock URL \url{https://arxiv.org/abs/2506.23925}.

\bibitem[Shi(2003)]{shi2003toffoli}
Yaoyun Shi.
\newblock Both toffoli and controlled-not need little help to do universal
  quantum computing.
\newblock \emph{Quantum Information \& Computation}, 3\penalty0 (1):\penalty0
  84--92, 2003.
\newblock \doi{10.26421/QIC3.1-7}.

\bibitem[Garc{\'\i}a-Mart{\'\i}n et~al.(2024)Garc{\'\i}a-Mart{\'\i}n, Braccia,
  and Cerezo]{garcia2024architectures}
Diego Garc{\'\i}a-Mart{\'\i}n, Paolo Braccia, and M~Cerezo.
\newblock Architectures and random properties of symplectic quantum circuits.
\newblock \emph{arXiv preprint arXiv:2405.10264}, 2024.

\bibitem[Brandão et~al.(2021)Brandão, Chemissany, Hunter-Jones, Kueng, and
  Preskill]{Brand_o_2021}
Fernando~G.S.L. Brandão, Wissam Chemissany, Nicholas Hunter-Jones, Richard
  Kueng, and John Preskill.
\newblock Models of quantum complexity growth.
\newblock \emph{PRX Quantum}, 2\penalty0 (3), July 2021.
\newblock ISSN 2691-3399.
\newblock \doi{10.1103/prxquantum.2.030316}.
\newblock URL \url{http://dx.doi.org/10.1103/PRXQuantum.2.030316}.

\bibitem[Poulin et~al.(2011)Poulin, Qarry, Somma, and Verstraete]{Poulin_2011}
David Poulin, Angie Qarry, Rolando Somma, and Frank Verstraete.
\newblock Quantum simulation of time-dependent hamiltonians and the convenient
  illusion of hilbert space.
\newblock \emph{Physical Review Letters}, 106\penalty0 (17), April 2011.
\newblock ISSN 1079-7114.
\newblock \doi{10.1103/physrevlett.106.170501}.
\newblock URL \url{http://dx.doi.org/10.1103/PhysRevLett.106.170501}.

\bibitem[Balasubramanian et~al.(2022)Balasubramanian, Caputa, Magan, and
  Wu]{Balasubramanian_2022}
Vijay Balasubramanian, Pawel Caputa, Javier~M. Magan, and Qingyue Wu.
\newblock Quantum chaos and the complexity of spread of states.
\newblock \emph{Physical Review D}, 106\penalty0 (4), August 2022.
\newblock ISSN 2470-0029.
\newblock \doi{10.1103/physrevd.106.046007}.
\newblock URL \url{http://dx.doi.org/10.1103/PhysRevD.106.046007}.

\bibitem[Baiguera et~al.(2025)Baiguera, Balasubramanian, Caputa, Chapman,
  Haferkamp, Heller, and Halpern]{baiguera2025quantumcomplexitygravityquantum}
Stefano Baiguera, Vijay Balasubramanian, Pawel Caputa, Shira Chapman, Jonas
  Haferkamp, Michal~P. Heller, and Nicole~Yunger Halpern.
\newblock Quantum complexity in gravity, quantum field theory, and quantum
  information science, 2025.
\newblock URL \url{https://arxiv.org/abs/2503.10753}.

\bibitem[Chen et~al.(2010)Chen, Gu, and Wen]{Chen_2010}
Xie Chen, Zheng-Cheng Gu, and Xiao-Gang Wen.
\newblock Local unitary transformation, long-range quantum entanglement, wave
  function renormalization, and topological order.
\newblock \emph{Physical Review B}, 82\penalty0 (15), October 2010.
\newblock ISSN 1550-235X.
\newblock \doi{10.1103/physrevb.82.155138}.
\newblock URL \url{http://dx.doi.org/10.1103/PhysRevB.82.155138}.

\bibitem[Ji et~al.(2018)Ji, Liu, and Song]{prs}
Zhengfeng Ji, Yi-Kai Liu, and Fang Song.
\newblock Pseudorandom quantum states.
\newblock In Hovav Shacham and Alexandra Boldyreva, editors, \emph{Advances in
  Cryptology -- CRYPTO 2018}, pages 126--152, Cham, 2018. Springer
  International Publishing.
\newblock ISBN 978-3-319-96878-0.

\bibitem[Mullane(2020)]{mullane2020sampling}
Sean Mullane.
\newblock Sampling random quantum circuits: a pedestrian's guide.
\newblock \emph{arXiv preprint arXiv:2007.07872}, 2020.

\bibitem[Kearns(1998)]{kearns1998efficient}
Michael Kearns.
\newblock Efficient noise-tolerant learning from statistical queries.
\newblock \emph{Journal of the ACM (JACM)}, 45\penalty0 (6):\penalty0
  983--1006, 1998.

\bibitem[Nietner et~al.(2023)Nietner, Ioannou, Sweke, Kueng, Eisert, Hinsche,
  and Haferkamp]{nietner2023average}
Alexander Nietner, Marios Ioannou, Ryan Sweke, Richard Kueng, Jens Eisert,
  Marcel Hinsche, and Jonas Haferkamp.
\newblock On the average-case complexity of learning output distributions of
  quantum circuits.
\newblock \emph{arXiv preprint arXiv:2305.05765}, 2023.

\bibitem[Aaronson and Chen(2016)]{aaronson2016complexity}
Scott Aaronson and Lijie Chen.
\newblock Complexity-theoretic foundations of quantum supremacy experiments.
\newblock \emph{arXiv preprint arXiv:1612.05903}, 2016.

\bibitem[Fefferman et~al.(2025)Fefferman, Ghosh, Sinha, and
  Yuen]{fefferman2025hardnesslearningquantumcircuits}
Bill Fefferman, Soumik Ghosh, Makrand Sinha, and Henry Yuen.
\newblock The hardness of learning quantum circuits and its cryptographic
  applications, 2025.
\newblock URL \url{https://arxiv.org/abs/2504.15343}.

\bibitem[Wadhwa and Doosti(2024)]{wadhwa2024learningquantumprocessesquantum}
Chirag Wadhwa and Mina Doosti.
\newblock Learning quantum processes with quantum statistical queries, 2024.
\newblock URL \url{https://arxiv.org/abs/2310.02075}.

\bibitem[Yang and
  Engelhardt(2023)]{yang2023complexitylearningpseudorandomdynamics}
Lisa Yang and Netta Engelhardt.
\newblock The complexity of learning (pseudo)random dynamics of black holes and
  other chaotic systems, 2023.
\newblock URL \url{https://arxiv.org/abs/2302.11013}.

\bibitem[Benedetti et~al.(2019)Benedetti, Lloyd, Sack, and
  Fiorentini]{benedetti2019parameterized}
Marcello Benedetti, Erika Lloyd, Stefan Sack, and Mattia Fiorentini.
\newblock Parameterized quantum circuits as machine learning models.
\newblock \emph{Quantum Science and Technology}, 4\penalty0 (4):\penalty0
  043001, 2019.

\bibitem[Coyle et~al.(2020)Coyle, Mills, Danos, and Kashefi]{coyle2020born}
Brian Coyle, Daniel Mills, Vincent Danos, and Elham Kashefi.
\newblock The born supremacy: quantum advantage and training of an ising born
  machine.
\newblock \emph{npj Quantum Information}, 6\penalty0 (1):\penalty0 60, 2020.

\bibitem[Thanasilp et~al.(2024)Thanasilp, Wang, Cerezo, and
  Holmes]{thanasilp2024exponential}
Supanut Thanasilp, Samson Wang, Marco Cerezo, and Zo{\"e} Holmes.
\newblock Exponential concentration in quantum kernel methods.
\newblock \emph{Nature communications}, 15\penalty0 (1):\penalty0 5200, 2024.
\newblock \doi{10.1038/s41467-024-49287-w}.
\newblock URL \url{https://www.nature.com/articles/s41467-024-49287-w}.

\bibitem[Rudolph et~al.(2024)Rudolph, Lerch, Thanasilp, Kiss, Shaya,
  Vallecorsa, Grossi, and Holmes]{Rudolph2024}
Manuel~S. Rudolph, Sacha Lerch, Supanut Thanasilp, Oriel Kiss, Oxana Shaya,
  Sofia Vallecorsa, Michele Grossi, and Zo{\"e} Holmes.
\newblock Trainability barriers and opportunities in quantum generative
  modeling.
\newblock \emph{npj Quantum Information}, 10\penalty0 (116), Nov 2024.
\newblock ISSN 2056-6387.
\newblock \doi{10.1038/s41534-024-00902-0}.
\newblock URL \url{https://doi.org/10.1038/s41534-024-00902-0}.

\bibitem[Schatzki et~al.(2024)Schatzki, Larocca, Nguyen, Sauvage, and
  Cerezo]{schatzki2024theoretical}
Louis Schatzki, Martin Larocca, Quynh~T Nguyen, Frederic Sauvage, and Marco
  Cerezo.
\newblock Theoretical guarantees for permutation-equivariant quantum neural
  networks.
\newblock \emph{npj Quantum Information}, 10\penalty0 (1):\penalty0 12, 2024.
\newblock \doi{10.1038/s41534-024-00804-1}.
\newblock URL \url{https://www.nature.com/articles/s41534-024-00804-1}.

\bibitem[Ragone et~al.(2023)Ragone, Braccia, Nguyen, Schatzki, Coles, Sauvage,
  Larocca, and Cerezo]{ragone2023representationtheorygeometricquantum}
Michael Ragone, Paolo Braccia, Quynh~T. Nguyen, Louis Schatzki, Patrick~J.
  Coles, Frederic Sauvage, Martin Larocca, and M.~Cerezo.
\newblock Representation theory for geometric quantum machine learning, 2023.
\newblock URL \url{https://arxiv.org/abs/2210.07980}.

\bibitem[Puig et~al.(2025)Puig, Drudis, Thanasilp, and Holmes]{Puig_2025}
Ricard Puig, Marc Drudis, Supanut Thanasilp, and Zoë Holmes.
\newblock Variational quantum simulation: A case study for understanding warm
  starts.
\newblock \emph{PRX Quantum}, 6\penalty0 (1), January 2025.
\newblock ISSN 2691-3399.
\newblock \doi{10.1103/prxquantum.6.010317}.
\newblock URL \url{http://dx.doi.org/10.1103/PRXQuantum.6.010317}.

\bibitem[Nietner(2024)]{nietner2024freefermiondistributionshard}
Alexander Nietner.
\newblock Free fermion distributions are hard to learn, 2024.
\newblock URL \url{https://arxiv.org/abs/2306.04731}.

\bibitem[Hinsche et~al.(2022)Hinsche, Ioannou, Nietner, Haferkamp, Quek,
  Hangleiter, Seifert, Eisert, and Sweke]{hinsche2022single}
Marcel Hinsche, Marios Ioannou, Alexander Nietner, Jonas Haferkamp, Yihui Quek,
  Dominik Hangleiter, Jean-Pierre Seifert, Jens Eisert, and Ryan Sweke.
\newblock A single $ t $-gate makes distribution learning hard.
\newblock \emph{arXiv preprint arXiv:2207.03140}, 2022.

\bibitem[Haferkamp and Hunter-Jones(2021)]{Haferkamp_2021}
Jonas Haferkamp and Nicholas Hunter-Jones.
\newblock Improved spectral gaps for random quantum circuits: Large local
  dimensions and all-to-all interactions.
\newblock \emph{Physical Review A}, 104\penalty0 (2), August 2021.
\newblock ISSN 2469-9934.
\newblock \doi{10.1103/physreva.104.022417}.
\newblock URL \url{http://dx.doi.org/10.1103/PhysRevA.104.022417}.

\bibitem[West(2026)]{west2026circularGaussianSQ}
Maxwell West.
\newblock On the average-case complexity of learning states from the circular
  and gaussian ensembles, 2026.
\newblock URL \url{https://arxiv.org/abs/2601.10197}.

\bibitem[West et~al.(2024{\natexlab{b}})West, Mele, Larocca, and
  Cerezo]{west2024randomensemblessymplecticunitary}
Maxwell West, Antonio~Anna Mele, Martin Larocca, and M.~Cerezo.
\newblock Random ensembles of symplectic and unitary states are
  indistinguishable, 2024{\natexlab{b}}.
\newblock URL \url{https://arxiv.org/abs/2409.16500}.

\bibitem[Schatzki(2024)]{schatzki2024randomrealvaluedcomplex}
Louis Schatzki.
\newblock Random real valued and complex valued states cannot be efficiently
  distinguished, 2024.
\newblock URL \url{https://arxiv.org/abs/2410.17213}.

\bibitem[Weingarten(1978)]{weingarten1978asymptotic}
Don Weingarten.
\newblock Asymptotic behavior of group integrals in the limit of infinite rank.
\newblock \emph{J. Math. Phys.(NY);(United States)}, 19\penalty0 (5), 1978.

\bibitem[Collins and Śniady(2006)]{Collins_2006}
Benoît Collins and Piotr Śniady.
\newblock Integration with respect to the haar measure on unitary, orthogonal
  and symplectic group.
\newblock \emph{Communications in Mathematical Physics}, 264\penalty0
  (3):\penalty0 773–795, March 2006.
\newblock ISSN 1432-0916.
\newblock \doi{10.1007/s00220-006-1554-3}.
\newblock URL \url{http://dx.doi.org/10.1007/s00220-006-1554-3}.

\bibitem[Meckes(2019)]{Meckes_2019}
Elizabeth~S. Meckes.
\newblock \emph{Concentration of Measure}, page 131–160.
\newblock Cambridge Tracts in Mathematics. Cambridge University Press, 2019.

\bibitem[Low(2009)]{low2009large}
Richard~A Low.
\newblock Large deviation bounds for k-designs.
\newblock \emph{Proceedings of the Royal Society A: Mathematical, Physical and
  Engineering Sciences}, 465\penalty0 (2111):\penalty0 3289--3308, 2009.

\bibitem[Dawson and Nielsen(2005)]{dawson2005solovaykitaevalgorithm}
Christopher~M. Dawson and Michael~A. Nielsen.
\newblock The solovay-kitaev algorithm, 2005.
\newblock URL \url{https://arxiv.org/abs/quant-ph/0505030}.

\bibitem[Preskill(2018)]{preskill2018quantum}
John Preskill.
\newblock Quantum computing in the nisq era and beyond.
\newblock \emph{Quantum}, 2:\penalty0 79, 2018.
\newblock \doi{10.22331/q-2018-08-06-79}.
\newblock URL \url{https://quantum-journal.org/papers/q-2018-08-06-79}.

\bibitem[Arunachalam et~al.(2020)Arunachalam, Grilo, and Yuen]{arunachalam2020}
Srinivasan Arunachalam, Alex~B. Grilo, and Henry Yuen.
\newblock Quantum statistical query learning, 2020.
\newblock URL \url{https://arxiv.org/abs/2002.08240}.

\bibitem[Arunachalam et~al.(2023)Arunachalam, Havlicek, and
  Schatzki]{arunachalam2023}
Srinivasan Arunachalam, Vojtech Havlicek, and Louis Schatzki.
\newblock On the role of entanglement and statistics in learning, 2023.
\newblock URL \url{https://arxiv.org/abs/2306.03161}.

\bibitem[Angrisani(2025)]{Angrisani2025}
Armando Angrisani.
\newblock Learning unitaries with quantum statistical queries.
\newblock \emph{{Quantum}}, 9:\penalty0 1817, July 2025.
\newblock ISSN 2521-327X.
\newblock \doi{10.22331/q-2025-07-30-1817}.
\newblock URL \url{https://doi.org/10.22331/q-2025-07-30-1817}.

\bibitem[Roberts and Yoshida(2017)]{Roberts_2017}
Daniel~A. Roberts and Beni Yoshida.
\newblock Chaos and complexity by design.
\newblock \emph{Journal of High Energy Physics}, 2017\penalty0 (4), April 2017.
\newblock ISSN 1029-8479.
\newblock \doi{10.1007/jhep04(2017)121}.
\newblock URL \url{http://dx.doi.org/10.1007/JHEP04(2017)121}.

\bibitem[Forkel and Keating(2022)]{forkel2022classicalcompactgroupsgaussian}
Johannes Forkel and Jonathan~P. Keating.
\newblock The classical compact groups and gaussian multiplicative chaos, 2022.
\newblock URL \url{https://arxiv.org/abs/2008.07825}.

\bibitem[Schuster et~al.(2025)Schuster, Haferkamp, and
  Huang]{schuster2025randomunitariesextremelylow}
Thomas Schuster, Jonas Haferkamp, and Hsin-Yuan Huang.
\newblock Random unitaries in extremely low depth, 2025.
\newblock URL \url{https://arxiv.org/abs/2407.07754}.

\bibitem[Haferkamp et~al.(2022)Haferkamp, Faist, Kothakonda, Eisert, and
  Yunger~Halpern]{Haferkamp_2022}
Jonas Haferkamp, Philippe Faist, Naga B.~T. Kothakonda, Jens Eisert, and Nicole
  Yunger~Halpern.
\newblock Linear growth of quantum circuit complexity.
\newblock \emph{Nature Physics}, 18\penalty0 (5):\penalty0 528–532, March
  2022.
\newblock ISSN 1745-2481.
\newblock \doi{10.1038/s41567-022-01539-6}.
\newblock URL \url{http://dx.doi.org/10.1038/s41567-022-01539-6}.

\bibitem[Hayden et~al.(2006)Hayden, Leung, and Winter]{Hayden_2006}
Patrick Hayden, Debbie~W. Leung, and Andreas Winter.
\newblock Aspects of generic entanglement.
\newblock \emph{Communications in Mathematical Physics}, 265\penalty0
  (1):\penalty0 95–117, March 2006.
\newblock ISSN 1432-0916.
\newblock \doi{10.1007/s00220-006-1535-6}.
\newblock URL \url{http://dx.doi.org/10.1007/s00220-006-1535-6}.

\bibitem[Popescu et~al.(2006)Popescu, Short, and Winter]{Popescu_2006}
Sandu Popescu, Anthony~J. Short, and Andreas Winter.
\newblock Entanglement and the foundations of statistical mechanics.
\newblock \emph{Nature Physics}, 2\penalty0 (11):\penalty0 754–758, October
  2006.
\newblock ISSN 1745-2481.
\newblock \doi{10.1038/nphys444}.
\newblock URL \url{http://dx.doi.org/10.1038/nphys444}.

\bibitem[West et~al.(2025)West, García-Martín, Diaz, Cerezo, and
  Larocca]{west2025nogotheoremssublineardepthgroup}
Maxwell West, Diego García-Martín, N.~L. Diaz, M.~Cerezo, and Martin Larocca.
\newblock No-go theorems for sublinear-depth group designs, 2025.
\newblock URL \url{https://arxiv.org/abs/2506.16005}.

\bibitem[Haar(1933)]{haar1933massbegriff}
Alfred Haar.
\newblock Der massbegriff in der theorie der kontinuierlichen gruppen.
\newblock \emph{Annals of mathematics}, 34\penalty0 (1):\penalty0 147--169,
  1933.

\bibitem[Diaz et~al.(2023)Diaz, Garc{\'\i}a-Mart{\'\i}n, Kazi, Larocca, and
  Cerezo]{diaz2023showcasing}
NL~Diaz, Diego Garc{\'\i}a-Mart{\'\i}n, Sujay Kazi, Martin Larocca, and
  M~Cerezo.
\newblock Showcasing a barren plateau theory beyond the dynamical lie algebra.
\newblock \emph{arXiv preprint arXiv:2310.11505}, 2023.

\bibitem[Goodman and Wallach(2000)]{goodman2000representations}
Roe Goodman and Nolan~R Wallach.
\newblock \emph{Representations and invariants of the classical groups}.
\newblock Cambridge University Press, 2000.

\bibitem[Harrow(2023)]{harrow2023approximate}
Aram~W Harrow.
\newblock Approximate orthogonality of permutation operators, with application
  to quantum information.
\newblock \emph{Letters in Mathematical Physics}, 114\penalty0 (1):\penalty0 1,
  2023.

\bibitem[Brauer(1937)]{brauer1937algebras}
Richard Brauer.
\newblock On algebras which are connected with the semisimple continuous
  groups.
\newblock \emph{Annals of Mathematics}, 38\penalty0 (4):\penalty0 857--872,
  1937.

\bibitem[Mattila(1999)]{mattila1999geometry}
Pertti Mattila.
\newblock \emph{Geometry of sets and measures in Euclidean spaces: fractals and
  rectifiability}.
\newblock Cambridge university press, 1999.

\end{thebibliography}
\bibliographystyle{unsrtnat}
\clearpage
\newpage
\appendix
\section{Proofs}
\label{app:proofs}
\subsection{Preliminaries}
We start by providing some essential mathematical preliminaries.
We denote the $D$-dimensional identity matrix by $\mathbb{I}_D$.

\smallskip
\noindent\textbf{Fields and groups.}
 A \emph{field} $\mathbb{F}$ is a set equipped with operations of addition, subtraction, multiplication, and division, which satisfy the axioms corresponding to those operations on the rational or real numbers. A \emph{group} $G$ is a set equipped with a binary operation $(g, h) \mapsto gh$ that is associative, admits an identity element, and in which every element possesses an inverse with respect to the group operation.
A group is called \emph{topological} if it is also a topological space and the group operations, i.e. the binary operation $(g, h) \mapsto gh$ and inversion $g \mapsto g^{-1}$, are continuous with respect to the topology. A topological matrix group $G$ is compact if it is closed and bounded and it is \emph{connected} if there is a continuous path within $G$ between any two elements. A topological space is called \emph{separable} if it contains a countable dense subset.
The \emph{general linear group} over the field $\mathbb{F}$, denoted $GL(D, \mathbb{F})$, is defined as the group of all invertible $D \times D$ matrices with entries in $\mathbb{F}$, under matrix multiplication. That is, it consists of all matrices with nonzero determinant and entries in $\mathbb{F}$.
We denote by $\mathbb{R}$, $\mathbb{C}$, and $\mathbb{H}$ the field of real numbers, the field of complex numbers, and the field of quaternions, respectively.

\smallskip
\noindent\textbf{Bra-ket notation.} Let $\{\ket{0},\ket{1}\}$ be the canonical basis of $\mathbb{C}^2$, and $\mathcal{H}_n =(\mathbb{C}^2)^{\otimes n}$ be the Hilbert space of $n$ qubits.
We use the bra-ket notation, where we denote a vector $v \in \mathcal{H}_n$ using the ket notation $\ket{v}$ and its adjoint using the bra notation $\bra{v}$. For $u,v\in \mathcal{H}_n$, we will denote by $\braket{u}{v}$ the standard Hermitian inner product $u^\dag v$. A pure state is a normalized vector $\ket{v}$, i.e. $|\braket{v}|=1$.

\smallskip
\noindent \textbf{Schatten Norms.}
For a vector $\ket{\psi}\in\mathcal{H}_n$ with entries $\psi_i$ and $p\in[1,\infty]$, the $p$-norm is defined as
\begin{align}
\|\ket{\psi}\|_p \;=\;\biggl(\sum_i|\psi_i|^p\biggr)^{1/p}.
\end{align}
Let $\calL(\calH_n)$ denote the set of linear operators $A:\calH_n \rightarrow \calH_n$.
For an operator $A\in\mathcal{L}(\mathcal{H}_n)$ and $p\in[1,\infty]$, the Schatten $p$-norm $\|A\|_p$ is the (vector) $p$-norm of the vector of singular values of $A$.
For $p=1$, it is often referred to as \emph{trace norm}, while for $p=2$ it is often referred to as \emph{Hilbert-Schmidt norm}.
For $p=\infty$, the Schatten $p$-norm is called \emph{operator norm} and can also be written
\begin{align}
\|A\|_\infty \;=\;\sup_{\|\ket{\psi}\|_2=1}\|A\ket{\psi}\|_2.
\end{align}
For convenience, we usually drop the subscript for the operator norm, i.e.
\begin{align}
\|A\|\;\coloneqq\;\|A\|_\infty.
\end{align}
Vector $p$-norms and Schatten $p$-norms are monotone with respect to the index $p$:
\begin{align}
    \norm{A}_p \geq \norm{A}_q \qquad \text{if $1\leq p\leq q \leq \infty$}. \label{eq:monotone}
\end{align}
Moreover, a reverse inequality with an additional dimensional factor also holds:
\begin{align}
    \norm{A}_p \leq D^{\frac{1}{p}-\frac{1}{q}}\norm{A}_q \qquad \text{if $1\leq p\leq q \leq \infty$} \label{eq:reverse},
\end{align}
where we set $D \coloneqq \mathrm{dim}(\mathcal{H}_n)$.
In particular, the operator norm and the Hilbert-Schmidt norm satisfy the following:
\begin{align}
    \norm{A} \leq \norm{A}_2 \leq\sqrt{D}\norm{A}.
\end{align}
Distances induced by Schatten norms play a central role in quantum information theory.
In particular, the distance induced by the Schatten $1$-norm, commonly called the \emph{trace distance}, admits the following form for pure states:
\begin{align}
    \tfrac{1}{2}\bigl\|\ketbra{\psi} - \ketbra{\phi}\bigr\|_1
    = \sqrt{1 - \abs{\braket{\psi}{\phi}}^2}. \label{eq:trace-dist}
\end{align}
Furthermore, for an operator $A$ acting on a $D$-dimensional Hilbert space, we have
\begin{align}\label{eq:alpha}
     \sum_{i,j} \abs{ A_{ij}}
     \leq D \sqrt{\sum_{i,j} \abs{ A_{ij}}^2} \eqqcolon D \norm{A}_2 \leq D \sqrt{D} \norm{A},
\end{align}
where we used the inequality in Equation~\eqref{eq:reverse} twice, first for the vector of the entries of $A$ with dimension $D^2$ and second for $A$ itself, and we noted that the $2$-norm of that vector coincides with the Schatten 2-norm of $A$.

\smallskip
\noindent\textbf{Haar measure.}
Compactness is crucial to define a measure. For the classical compact groups, there exists a unique normalised invariant Radon measure, that is, for all $A \subset G, g \in G$:
\begin{align}
    \mu(A) = \mu(\{gh| h \in A\}) = \mu(\{hg| h \in A\}).
\end{align}
This invariant measure is also called the Haar measure as introduced in Ref.~\cite{haar1933massbegriff}.

\smallskip
\noindent
\textbf{Computing moments of the Haar measure.}
In order to compute the moments of the Haar measure, we consider the notion of \emph{group commutant}.
Denote the space of linear operators acting on $n$ qubits as $\mathcal{B}$. The commutant of the $k$-fold tensor representation of a group $G$
refers to the set of all linear operators that commute with every element of the group acting on the $k$-fold tensor product space, i.e.
\begin{align}
    \text{comm } (G^{\otimes k}) \coloneq  \{ A \in \mathcal{B}^{\otimes k} \, | \, \forall U \in G : [A,U^{\otimes k}] = 0 \}
\end{align}
The $k$-th moments can be calculated via the projector into the commutant of the $k$-th fold tensor representation of the group $G$~\cite{diaz2023showcasing}.
\begin{align}\label{eq:moments}
     \mathcal{E}^{(k)}_G&\colon \mathcal{B}^{\otimes k} \to \text{comm } (G^{\otimes k})\\
     &A \mapsto \mathbb{E}U^{\otimes k} A U^{\dag\,\otimes k} = \int_G dU U^{\otimes k} A U^{\dag\,\otimes k} =
     \sum_{\eta = 1}^{\dim(\text{comm}(G^{\otimes k}))} \Tr(B_{\eta}^{(k)}A) B_{\eta}^{(k)}
\end{align}
where $\{B_{\eta}^{(k)}\}_{\eta =1 }^{\dim \text{comm}(G^{\otimes k})}$ is a Hermitian, orthonormal basis of the commutant of the $k$-th fold tensor representation of $G$.
For the classical compact groups, the structure of the commutant of the $k$-fold tensor representation is characterized by Schur-Weyl duality~\cite{goodman2000representations}. Specifically, for the unitary group $U(D)$, the commutant of its $k$-fold tensor representation is spanned by a representation of the symmetric group $S_k$, whose elements correspond to permutations~\cite{harrow2023approximate}. In contrast, for the orthogonal and symplectic groups, the commutant is spanned by representations of the Brauer algebra  $\mathfrak{B}_k(D)$ and $\mathfrak{B}_k(-D)$, respectively~\cite{brauer1937algebras}. These algebras consist of all pairings of a set of size $2k$ and in particular contain the symmetric group as a subalgebra.
Despite these structural differences, the commutant of the first tensor power is trivial for all three groups, containing only the identity, implying that $SO(D)$ and $Sp(\frac{D}{2})$ are unitary $1$-designs.
\subsection{Proof of Theorem~\ref{thm:high_complexity}}\label{sec:proof_high_complexity}
First, we make an observation that lets us transfer Lipschitz constants.
We note that any function on the state space, $f : \mathcal{H}_n \rightarrow \mathbb{R}$, can naturally be viewed as a function on the unitary group, $f : U(2^n) \rightarrow \mathbb{R}$, by defining $f(U) := f(U\ket{0}^{\otimes n})$.
The following lemma will be useful for bounding the Lipschitz constant of such functions when considered over the unitary group.
\begin{lemma}\label{lemma:lipschitz}
If a function $f$ is $L$-Lipschitz on the state space, then it is also $L$-Lipschitz when viewed as a function on unitaries.
\end{lemma}
\begin{proof}
The desired result follows from the chain of inequalities
\begin{align}
    | f(U_1 \ket{0}^{\otimes n}) - f(U_2 \ket{0}^{\otimes n}) |
    &\leq L \norm{(U_1 - U_2) \ket{0}^{\otimes n}}_2 \\
    &\leq
    L \norm{U_1 - U_2} \\
    &\leq L \norm{U_1 - U_2}_{2}.
\end{align}
Here, the first inequality follows from the Lipschitz continuity of $f$ with constant $L$; the second uses the definition of the operator norm, i.e. $\norm{A} = \sup_{\norm{v}_2=1} \norm{A v}_2$, and the final inequality follows from the monotonicity of Schatten norm as stated in Equation~\eqref{eq:monotone}.
This completes the proof.
\end{proof}
Now we are ready to prove Theorem \ref{thm:high_complexity}.
\begin{theoremrestated}[High strong state complexity]{thm:high_complexity}
\medskip
\noindent\textbf{(i)}
Fix $\delta\in(0,1)$. Then
\begin{align}
\Pr_{\ket{\psi}\sim \mathcal{S}_G}\!\left[\mathcal{C}_\delta(\ket{\psi})\le r\right]
\;\le\;
5D (n+1)^r |\mathsf{G}|^r \exp\!\left(-\frac{(D-2)(1-\delta)^2}{32}\right).
\end{align}
In particular, until $r \simeq \frac{D}{\log n}$ this probability is exponentially small in $D$.
\medskip
\noindent\textbf{(ii)}
Assume
$\delta \in (0,1-\frac{1}{D}-c)$ for some $c>0$ $k>3$ and $\varepsilon \le 2^{-5n-\frac{2k}{3}\log(\frac{5}{4c})}$.

\begin{align}
    \mathrm{Pr}_{\ket{\psi} \sim \mathcal{S}^{(k,\varepsilon)}_G} \left[ \mathcal{C}_\delta (| \psi \rangle) \leq r \right]
\leq
2 (1-D^{-1}-\delta)^{-\frac{2k}{3}} D (n+1)^r\abs{\mathsf{G}}^r \left( 2
 \left(\frac{32 k}{3(D-2)}\right)^{\frac{k}{3}}+ \varepsilon \left( 1+ \frac{1}{D^{\frac{3}{2}}}\right)^{\frac{2k}{3}}
\right)
\end{align}
In particular, this bound is polynomially suppressed in $D$ until 
$r \simeq \frac{n}{\log n}$.
\end{theoremrestated}
\begin{proof}
\medskip
Following the steps in Ref.~\cite[Appendix A 1 a]{Brand_o_2021}, in Equation~\eqref{eq:union_bound} we apply a union bound. Next, we apply Lévy's Lemma (Theorem~\ref{th:levi}) in Equation~\eqref{eq:levi_lemma}
to give, after further simplifications, an exponentially (in system size) suppressed upper bound on the probability that a state has low strong state complexity. Note that, as in Ref.~\cite[Proposition 29]{Brand_o_2021}, the function $f_M \colon \ket{\psi}  \mapsto \bra{\psi} M \ket{\psi}$ is Lipschitz with Lipschitz constant $2 \norm{M}_\infty\leq 2$ and by Lemma \ref{lemma:lipschitz} also $2$-Lipschitz as a function on the unitary subgroups. For $G \in \{SO(D), Sp(\frac{D}{2}), SU(D)\}$
we bound the probability of states $\ket{\psi} \sim \mathcal{S}_{G} $ having strong state complexity at most $r$:
\begin{align}
\mathrm{Pr}_{\ket{\psi} \sim \mathcal{S}_{G}} \left[ \mathcal{C}_\delta (| \psi \rangle) \leq r \right]
&= \mathrm{Pr}_{\ket{\psi} \sim \mathcal{S}_{G}}  \left[ \max_{M \in \mathsf{M}_r} \left| \Tr \left( M \left(| \psi \rangle \! \langle \psi| - \frac{\mathbb{I}_D}{D}\right) \right) \right| \geq 1- D^{-1} -\delta \right] \label{eq:union_bound} \\
&\leq \left| \mathsf{M}_r \right| \max_{M \in \mathsf{M}_r} \mathrm{Pr}_{\ket{\psi} \sim \mathcal{S}_{G}} \left[ \left| \Tr \left( M \left(|\psi \rangle \! \langle \psi| - \frac{\mathbb{I}_D}{D} \right) \right) \right| \geq 1- D^{-1} - \delta \right]
\label{eq:levi_lemma}
\\
&\leq 4D(n+1)^r \abs{\mathsf{G}}^r
    \exp \left( - \frac{(D-2) (1- D^{-1} - \delta)^2}{32}\right)
\\&\leq 5D(n+1)^r \abs{\mathsf{G}}^r  \exp \left( - \frac{(D-2) (1- \delta)^2}{32}\right)\label{eq:high_complexity}
\end{align}
 Until  $r \simeq  \frac{D}{\log(n)}$, the probability of a state evolved under unitaries drawn randomly from the classical compact groups not having high state complexity is exponentially suppressed in $D$.
For state ensembles that arise from $\varepsilon$-approximate $k$-designs $\mathcal{S}^{(k,\varepsilon)}_G$, we can derive an analogous bound via Theorem \ref{th:large_deviation}, but this time we find that the upper bound is polynomially (rather than exponentially) suppressed in the system size  under certain conditions. We consider
\begin{align}
    f(U) = \Tr(M \ketbra{\psi}) = \Tr(M U \ketbra{0}^{\otimes n} U^\dag) = \sum_{j, k} M_{jk} U_{k0} U^\dag_{0j},
\end{align}  which is a polynomial of degree one.
POVM measurements are normalized and the operator norm of $M$ is in turn bounded by one, so we get $\alpha(f) = \sum_{ij} \abs{ M_{ij}} \leq D\sqrt{D}$ by using Equation~\eqref{eq:alpha}.
Furthermore,
\begin{align}
    \abs{\mathbb{E}_U f(U) } = \abs{\Tr(M \int d\mu(U) U \ketbra{0}^{\otimes n} U^\dag )} =\frac{1}{D} \Tr(M) \leq 1,
\end{align}
where again we use that $M \leq I$ and hence $\Tr(M) \leq D$.
Again, by Theorem~\ref{th:large_deviation} we have for an integer $m$ such that  $2m \leq k$,
\begin{align}\label{eq:high_complexity_designs}
\mathrm{Pr}_{\ket{\psi} \sim \mathcal{S}^{(k,\varepsilon)}_G}  \left[ \mathcal{C}_\delta (| \psi \rangle) \leq r \right]
\leq
\frac{2D(n+1)^r\abs{\mathsf{G}}^r}{ (1- D^{-1} - \delta)^{2m}}
\left( 2 \left( \frac{32 m}{D-2}\right)^m + \frac{\varepsilon}{D^k} (D^{\frac{3}{2}} +1)^{2m} \right)
\end{align}
Setting $m= \frac{k}{3}$, preventing the
$\varepsilon$-term from blowing up, leads to the following bounds under the reasonable assumptions 
$\delta \in (0,1-\frac{1}{D}-c)$ for $c>0$.
\begin{align}
    \mathrm{Pr}_{\ket{\psi} \sim \mathcal{S}^{(k,\varepsilon)}_G} \left[ \mathcal{C}_\delta (| \psi \rangle) \leq r \right]
\leq
2 (1-D^{-1}-\delta)^{-\frac{2k}{3}} D (n+1)^r\abs{\mathsf{G}}^r \left( 2
 \left(\frac{32 k}{3(D-2)}\right)^{\frac{k}{3}}+ \varepsilon \left( 1+ \frac{1}{D^{\frac{3}{2}}}\right)^{\frac{2k}{3}}
\right)
\end{align}
The probability that a state sampled from an $\varepsilon$-approximate $k$-design for $D>2$,$k>3$ and $\varepsilon \le 2^{-5n-\frac{2k}{3}\log(\frac{5}{4c})}$  having a state complexity less than $r$ stays polynomially suppressed in $D$ until $r\simeq \frac{n }{\log(n)} $.
.
\end{proof}
\subsection{Proof of Theorem \ref{thm:high_complexity_separation} } \label{sec:proof_high_complexity_seperation}
\begin{theoremrestated}[Separation of highly complex states]{thm:high_complexity_separation}
For each of the ensembles below, there exist $N$ pure states
$\ket{\psi_1}, \ldots, \ket{\psi_N}$ in the corresponding ensemble such that
$\mathcal{C}_\delta(\ket{\psi_i}) \ge r$ for all $i$ and their pairwise trace distance satisfies
\begin{align}
\forall i \neq j,\quad \frac{1}{2} \|\ketbra{\psi_i}{\psi_i} - \ketbra{\psi_j}{\psi_j}\|_1 \geq 1 - \Delta.
\end{align}
\medskip
\noindent\textbf{(i) $\mathcal{S}_G$.
}
Let $\delta \in (0,1), \Delta \in (\frac{1}{D}, 1)$ and  $r\lesssim \frac{D}{\log n}$. Then
\begin{align*}
N&= \frac{1}{6}\exp\!\left(\frac{D\Delta^2}{32}\right).
\end{align*}
\medskip
\noindent\textbf{(ii)
$\mathcal{S}^{(k,\varepsilon)}_G$}.
Let $\delta \in (0,1-\frac{1}{D}-c)$ for $c>0$, $\Delta \geq \frac{1}{D} + 4 \sqrt{\frac{k}{D-2}}, k>3, $ $ \varepsilon \le 2^{-5n-\frac{2k}{3}\log(\frac{5}{4c})}$ and $r \lesssim \frac{ n}{\log n}$.
Define
\begin{align*}
p_{k,\varepsilon}(\Delta)
&\coloneqq \left(\Delta-\frac{1}{D}\right)^{-k}
\left(
2\left(\frac{16k}{D-2}\right)^{k/2} + 2^k\varepsilon
\right).
\end{align*}
Then
\begin{align*}
N&= \frac{1}{2\,p_{k,\varepsilon}(\Delta)} .
\end{align*}
\end{theoremrestated}
\begin{proof}
The proof is analogous to the one provided in Ref.~\cite[Appendix A 1.b]{Brand_o_2021}. We can construct an exponentially large set with high complexity states that have almost maximal trace distance via a probabilistic method.
In a first step, we show that on average, a state sampled from these ensembles is far away from any fixed pure state $| \phi \rangle \! \langle \phi|$.
Using that for pure states the trace distance is the square root of one minus the fidelity, see Eq.~\eqref{eq:trace-dist}, we have
\begin{align}
    \mathrm{Pr}_{\ket{\psi} \sim \mathcal{S}_{G}} \left[ \tfrac{1}{2} \left\| \ketbra{\psi} -\ketbra{\phi} \right\|_1 \leq 1-\Delta \right] =\mathrm{Pr}_{\ket{\psi} \sim \mathcal{S}_{G}} \left[ | \langle \psi | \phi \rangle|^2 \geq (2 - \Delta)\Delta \right]
    \leq \mathrm{Pr}_{\ket{\psi} \sim \mathcal{S}_{G}} \left[ | \langle \psi | \phi \rangle|^2 \geq \Delta \right],
\end{align}
where in the last inequality we used that $(2-\Delta)\Delta\geq \Delta $ for $\Delta \in [0,1]$.
Next, we apply Lévy's Lemma to $f_{\ketbra{\phi}} \colon U  \mapsto  \Tr(U \ketbra{0}U^\dag \ketbra{\phi}) $ which is $2 \norm{\ketbra{\phi}}_\infty = 2 $ Lipschitz (\cite[Proposition 29]{Brand_o_2021}). This  leads to $2\exp\left(- \frac{\tau^2}{8 C_G} \right) \geq \mathrm{Pr}[ \abs{\abs{\braket{\psi}{\phi}}^2- \frac{1}{D}} \geq \tau ] \geq \mathrm{Pr}[  \abs{\braket{\psi}{\phi}}^2 \geq \tau + \frac{1}{D} ]$ for $\tau\geq 0 $ .
Now we can choose $\tau = \Delta-\frac{1}{D}$ for any $\Delta \in (\frac{1}{D},1)$, such that
\begin{align}
\mathrm{Pr}_{\ket{\psi} \sim \mathcal{S}_{G}} \left[ | \langle \psi | \phi \rangle|^2 \geq \Delta \right]
&\leq
2
    \exp \left(-\frac{(\Delta - \frac{1}{D})^2(D-2)}{32}\right) \leq 3  \exp \left(-\frac{D\Delta^2}{32}\right)
 \label{eq:far_away_from_fixed}
\end{align}
where basic algebraic manipulations lead to the last inequality.
Using Theorem~\ref{th:large_deviation}, with  $\alpha(f_{\ketbra{\phi}}) \leq D$ and $\mathbb{E}_U f_{\ketbra{\phi}}(U) =  \frac{1}{D}$, we obtain for $\varepsilon$-approximate $k$-designs
\begin{align}
    \mathrm{Pr}_{\ket{\psi} \sim \mathcal{S}^{(k,\varepsilon)}_G} \left[ \tfrac{1}{2} \left\| \ketbra{\psi} -\ketbra{\phi} \right\|_1 \leq 1-\Delta \right] \leq
    \left(\Delta - \frac{1}{D} \right)^{-2m} \left( 2 \left(\frac{32m}{D-2}\right)^{m} +\frac{\varepsilon}{D^k} \left(D +\frac{1}{D}\right)^{2m} \right)
     \label{eq:far_away_from_fixed_designs}
\end{align}
Setting $m = \frac{k}{2}$ to its maximum value leads to the following bounds,
\begin{align}\label{eq:design_bounds_ortho}
      \mathrm{Pr}_{\ket{\psi} \sim \mathcal{S}^{(k,\varepsilon)}_G} \left[ \tfrac{1}{2} \left\| \ketbra{\psi} -\ketbra{\phi} \right\|_1 \leq 1-\Delta \right]
     \leq    \left(\Delta - \frac{1}{D} \right)^{-k}\left(
      2\left(\frac{16k}{D-2}\right)^{\frac{k}{2} } +  2^k\varepsilon \right)
\end{align}
For this probability to be polynomially suppressed in $D$, we require an exponentially small $\varepsilon$, namely $\varepsilon \leq \frac{1}{2}D^{-\frac{k}{2}}$, and additionally $\varepsilon$ must satisfy the hypothesis of Theorem~\ref{thm:high_complexity}(ii) since Equation~\eqref{eq:high_complexity} is invoked below. Furthermore, $\Delta \geq \frac{1}{D} + 4 \sqrt{\frac{k}{D-2}}$ and $k \geq 2$.
For the construction we need that both probabilities described by Equation~\eqref{eq:design_bounds} and Equation~\eqref{eq:design_bounds_ortho} have to be small.
By Equation~\eqref{eq:high_complexity} there exists an $r\lesssim \frac{D}{\log n}$ resp.  $r\lesssim \frac{n}{\log n}$  such that the probability for a state to have low complexity $\mathrm{Pr} \left[ \mathcal{C}_\delta (| \psi \rangle) \leq r \right]$ is strictly smaller than $\frac{1}{2}$.
Equivalently, there is an $r$ such that the probability to sample a high-complexity state is $\mathrm{Pr} \left[ \mathcal{C}_\delta (| \psi \rangle) \geq r \right] \ge \frac{1}{2}$, hence there exists such a state $\ket{\psi_1}$ satisfying $\mathcal{C}_\delta (| \psi_1 \rangle) \geq r$.
To argue for a second state in the list, we employ the probabilistic method once more.
By Equation~\eqref{eq:far_away_from_fixed} resp. \eqref{eq:far_away_from_fixed_designs} the probability that a pure state is close to any fixed pure state $\ketbra{\phi}$, $ \mathrm{Pr} \left[ \tfrac{1}{2} \left\| \ketbra{\psi} -\ketbra{\phi} \right\|_1 \leq 1-\Delta \right]$, which we denote for all cases by $p_G$, is exponentially (resp. inverse polynomially) suppressed in $D$.
By the union bound, the probability of picking a low-complexity state that is close to the previously chosen high-complexity state $\ket{\psi_1}$ is \begin{align}
    \mathrm{Pr} \left[ \mathcal{C}_\delta (| \psi \rangle) \leq r \cup \frac{1}{2} \norm{\ketbra{\psi} - \ketbra{\psi_1}}_1 \leq 1-\Delta\right]
    &\leq \mathrm{Pr} \left[ \mathcal{C}_\delta (| \psi \rangle) \leq r\right]  + \mathrm{Pr} \left[\frac{1}{2} \norm{\ketbra{\psi} - \ketbra{\psi_1}}_1 \leq 1-\Delta\right] \nonumber\\
    &\leq \frac{1}{2} +p_G.
\end{align}
Hence, the complementary probability is strictly greater than $ \frac{1}{2}-p_G$, and there exists a high complexity state $\ket{\psi_2}$ that is almost orthogonal to the previous choice $\ket{\psi_1}$. Repeating this leads to an ensemble of high-complexity states that are almost maximally far apart until the number of states $N$ does not counterbalance the suppression in $D$ of the probability $p_G$.
This is the case until $Np_G > \frac{1}{2}$.
\end{proof}
\subsection{Proof of the uniformity of the average Born distribution}
\begin{lemma}\label{lemma:uniform}
Then the average Born distribution induced by the classical compact groups equals the uniform distribution, i.e.
\begin{align}
    \mathbb{E}_{U\sim\mu_G}\,P_U=\mathcal{U}.
\end{align}
\end{lemma}
\begin{proof}
We will use the fact that all classical compact groups are unitary 1-designs.
Given that
$\ketbra{0}^{\otimes n} = \frac{1}{D} (\mathbb{I}+ Z)^{\otimes n}$,
\begin{align}
    \mathbb{E}\ketbra{\psi}  &= \mathbb{E}_{U\sim \mu_G} U \ketbra{0}^{\otimes n} U^\dag  = \mathcal{E}^{(1)}(\ketbra{0}^{\otimes n}) =
    \sum_{\eta = 1}^{\dim \text{comm}(G)} \Tr(B_{\eta} \ketbra{0}^{\otimes n})B_{\eta}
    \\&= \frac{1}{D} \sum_{\eta} \left(\Tr({B}_{\eta}) +\sum_i \Tr({B}_{\eta} Z_i) + \sum_{i< j}\Tr( {B}_{\eta} Z_i Z_j) + \cdots \Tr({B}_{ \eta} Z^{\otimes n})  \right) {B}_{\eta}
    \\&= \frac{1}{D}\left( \mathbb{I} + \sum_{\eta} \left(\sum_i \delta_{{B}_{\eta} Z_i} Z_i + \sum_{i<j} \delta_{{B}_{\eta} Z_{ij}} Z_{ij} + \cdots + \delta_{{B}_{\eta} Z^{\otimes n}} Z^{\otimes n}
    \right)\right).
\end{align}
where in the last equality, we used that all Pauli operators are traceless except the identity.
Therefore, the expectation is determined by the intersection of basis elements of the commutant with $ \left\{\frac{\mathbb{I}}{\sqrt{2}}, \frac{Z}{\sqrt{2}} \right\}^{\otimes n}$. Given that the commutant consists solely of the identity, the average state is the maximally mixed state $\mathbb{E}\ketbra{\psi}  = \frac{\mathbb{I}_D}{D}$.
Now,
\begin{align}\label{eq:1mom}
    \mathbb{E}\, P_U(x) = \langle \mathbb{E}\ketbra{\psi}, \ketbra{x} \rangle = \frac{1}{D}  \eqcolon \mathcal{U}(x).
\end{align}
Hence, the average Born distribution is the uniform distribution.
\end{proof}
\subsection{Proof of Theorem~\ref{th:Gaussian_SO}}\label{sec:proof_Gaussian_SO}
\begin{theoremrestated}{th:Gaussian_SO}
Let $f : \mathbb{R}^D \to \mathbb{R}$ be homogeneous of even degree $2k$ and let $\mu_{SO}$ denote the Haar-induced measure on real pure states.
Then
\begin{align}
\mathbb{E}_{\ket{\psi}\sim \mu_{SO}}[ f(\ket{\psi}) ]
=
\frac{\mathbb{E}_{g\sim \mathcal{N}(0,\mathbb{I}_D)}[f(g)]}{k!\,2^k\binom{D/2+k-1}{k}}.
\end{align}
\end{theoremrestated}
\begin{proof}
First, we note that a Gaussian random vector $X \sim \mathcal{N}(0,\mathbb{I}_D)$ can be represented in the polar form  as $X = r \theta$
where $r = \norm{X}_2 $ is the length and $\theta = \frac{X}{r}$ the direction of $X$.
$r$ and $\theta$ are independent.
Similarly as in Ref.~\cite[Theorem 17]{nietner2023average}, for a homogeneous function with an even degree $2k \in 2 \mathbb{N}$, we can therefore simplify the expectation value as follows:
\begin{align}\label{eq:Gaussian}
     \mathbb{E}_{g \sim \mathcal{N}(0,\mathbb{I}_D)} f(g) &=
     \mathbb{E}_{r \sim \chi_{D}, \theta \sim \mu_S} f( r \theta ) =   \mathbb{E}_{r \sim \chi_{D}, \theta \sim \mu_S} r^{2k} f(\theta)
   \\& =  \mathbb{E}_{r^2 \sim \chi^2_{D}} r^{2k}    \mathbb{E}_{ \theta \sim \mu_S} f(\theta).
\end{align}
The second equality uses the assumption that $f$ is a homogeneous function. The third equality follows from the independence of $r$ and $\theta$.
$\mu_S$ denotes the uniform distribution on the real unit sphere.
By~\cite[Theorem 3.7]{mattila1999geometry}
 pure Haar random states over the special orthogonal group are distributed uniformly on the real sphere.
The moments of $\chi^2$ distributed random variables with $D$ degrees of freedom are
\begin{align}\label{eq:chi2moments}
\mathbb{E}_{r^2 \sim \chi^2_{D}}\, r^{2k}
= 2^{k}\,\frac{\Gamma\!\left(k+\frac{D}{2}\right)}{\Gamma\!\left(\frac{D}{2}\right)}
= k!\,2^k\,\binom{\frac{D}{2}+k-1}{k}.
\end{align}
Rearranging yields the claimed identity.
\end{proof}
\subsection{Bounds on the total variation distance between the Born and uniform distribution} \label{sec:proof_constants}
Now we can use Theorem \ref{th:Gaussian_SO} to bound the total variation distance between the Born and uniform distribution.
\begin{lemma}\label{lem:tv_distance_bounds}
Let $U\sim \mu_{SO}$ and let $P_U$ be the induced Born distribution. Define
\[
M_{SO}\coloneqq \frac12\mathbb{E}_{z\sim \mathcal{N}(0,1)}\bigl|z^2-1\bigr|=\sqrt{\frac{2}{\pi e}},
\qquad
\Delta_{SO}\coloneqq \frac12\mathbb{E}_{\rho\sim \chi^2_D}\left|1-\frac{\rho}{D}\right|.
\]
Then $0\le \Delta_{SO}\le \frac{1}{\sqrt{2D}}$ and
\[
M_{SO}-\Delta_{SO}\le \mathbb{E}_{U} d_{TV}(P_U,\mathcal{U}) \le M_{SO}+\Delta_{SO}.
\]
For $SU(D)$ (and likewise $Sp(D/2)$), the same inequality holds with $M_{SU}=1/e$ and $\Delta_{SU}\leq \frac{1}{2\sqrt{D}}$ as in \cite[Theorem 19]{nietner2023average}.
\end{lemma}

\begin{proof}

Describing the total variation distance between the Born distribution and the uniform distribution as a homogeneous function with an even degree will allow us to compute its expectation value.
As~\cite[Section 3.2]{nietner2023average},
we recall
\begin{align}
   & d_{TV}(P_U, \mathcal{U} ) =
      \frac{1}{2}  \sum_x \abs{P_U(x) - \mathcal{U}(x)}  =  \frac{1}{2} \sum_x \abs{\abs{\bra{x} U \ket{0}}^2 - \frac{1}{D} } =\frac{1}{2D} \sum_x \abs{D\abs{\bra{x} U \ket{0}}^2 - 1 }.
\end{align}
Hence,
\begin{align}
     \mathbb{E}_{U \sim \mu_U} d_{TV}(P_U, \mathcal{U} ) &=
      \frac{1}{2D}  \sum_x \mathbb{E}_{\ket{\psi} \sim \mu}   \abs{ D \abs{\bra{x}\ket{\psi}}^2 - 1}.
\end{align}
Also, $f_x(\ket{\psi}) :=  \abs{D \abs{\braket{x}{\psi}}^2- \norm{\ket{\psi}}_2^2} $ is a homogeneous function of degree two as for $ a\in \mathbb{R}$:
\begin{align}\label{eq:fx}
  f_x(a\ket{\psi}) &=  \abs{D \abs{\braket{x}{a \psi}}^2- \norm{a\ket{\psi}}_2^2}
    = a^2 f_x(\ket{\psi}).
\end{align}
Orthogonal matrices have real entries and the computational basis vectors are real-valued. Thus, the complex inner product reduces to the real inner product, and the domain of $f_x$ is $\mathbb{R}^D$. We can therefore apply Gaussian integration to determine its expectation value:
\begin{align}
  \mathbb{E}_{\ket{\psi} \sim \mu}  f_x(\ket{\psi}) &= \frac{1}{D}  \mathbb{E}_{g \sim \mathcal{N}(0,\mathbb{I})} f_x(g) =  \frac{1}{D} \mathbb{E}_{g \sim \mathcal{N}(0,\mathbb{I})} \abs{D \abs{x \cdot g}^2- \norm{g}_2^2}.
\end{align}
The inner product of the computational basis vector with the Gaussian vector $x \cdot g$ picks one component of the Gaussian vector, yet, $\forall i \in \{1, \dots, D\},\;  g_i \sim \mathcal{N}(0,1) $. Therefore, we can pick w.l.o.g. the first component:
\begin{align}
    \mathbb{E}_{\ket{\psi} \sim \mu}  f_x(\ket{\psi}) &=\frac{1}{D} \mathbb{E}_{g_i \overset{iid}{\sim} \mathcal{N}(0,1)} \abs{D g_1^2-  \sum_{i=1}^{D} g_i^2}.
\end{align}
Hence, the expectation value of $f_x$ is independent of the computational basis state, and in total, the expectation of the total variation distance between the Born distribution and the uniform distribution simplifies to
\begin{align}
     \mathbb{E}_{U \sim \mu_U} d_{TV}(P_U, \mathcal{U} ) &= \frac{1}{2} \mathbb{E}_{g_i \overset{iid}{\sim} \mathcal{N}(0,1)} \abs{g_1^2 - \frac{1}{D} \sum_{i=1}^D g_i^2}.
\end{align}
Through the triangle inequality and the reverse triangle inequality, we can bound this expectation value in terms of the asymptotic mean value $M_{SO} = \frac{1}{2}\mathbb{E}_{g_1 \sim \mathcal{N}(0,1)} \abs{ g_1^2- 1}$ and  $\Delta_{SO} = \frac{1}{2} \mathbb{E}_{g_i \overset{iid}{\sim} \mathcal{N}(0,1)} \abs{1- \frac{1}{D}  \sum_{i=1}^{D} g_i^2} =  \frac{1}{2} \mathbb{E}_{\rho \sim \chi^2} \abs{1- \frac{1}{D} \rho}$.
We get,
\begin{align}
    M_{SO} &= \frac{1}{2} \frac{1}{\sqrt{2\pi}} \int_{-\infty}^{\infty} \abs{x^2-1} e^{-x^2/2} dx \\
    &=\frac{1}{\sqrt{2\pi}} \int_{0}^{\infty}  \abs{x^2-1}e^{-\frac{x^2}{2}} dx
    \\&= \frac{1}{\sqrt{2\pi}} \left(\int_{0}^{1} (1-x^2)e^{-\frac{x^2}{2}} dx+  \int_1^{\infty} (x^2-1) e^{-\frac{x^2}{2}} dx\right) \\&
= \frac{1}{\sqrt{2\pi}}\left(
e^{-x^2/2}x\Big|_0^1 - e^{-x^2/2}x\Big|_1^\infty
\right)
=\sqrt{\frac{2}{ \pi e}}.
\end{align}
In the second line, we used that the integrand is even. In the third line, we split the integration area according to the definition of the absolute value.
In the last line, we used the Gaussian integral
\begin{align}
&\int e^{-x^2/2} (1-x^2) dx =  e^{-x^2/2} x + c.
\end{align}

We bound $\Delta_{SO}$ via Jensen's inequality for the concave square root and using that the variance of a $\chi^2$-distributed random variable with $D$ degrees of freedom is $2D$:
\begin{align}
 \Delta_{SO} = \frac{1}{2D}  \mathbb{E}_{\rho \sim \chi^2} \abs{D-  \rho} \leq  \frac{1}{2D} \left(\mathbb{E}_{\rho \sim \chi^2} (D-\rho)^2\right)^{\frac{1}{2}} = \frac{1}{ \sqrt{2D}}.
\end{align}

In total, we can bound the expected total variation distance between the Born distribution and target distribution as follows:
\begin{align}
  M_{SO}- \Delta_{SO} \leq \mathbb{E}_{U \sim \mu_U} d_{TV}(P_U, \mathcal{U} ) \leq  M_{SO}+ \Delta_{SO}.
\end{align}
In the limit of large system size, $\Delta_{SO}$ vanishes.

For the unitary group the same bound with $M_{SU} = \frac{1}{e} , \Delta_{SU}= 2^{- \frac{n}{2} -1}$ by~\cite[Theorem 19]{nietner2023average}. The bound with the same constants as in the unitary case holds for $\mathcal{P}_{Sp}$ as for pure states  $\mathcal{S}_{Sp}$ is an exact unitary state design for all $k$ by~\cite[Theorem 1]{west2024randomensemblessymplecticunitary}.

\end{proof}

\subsection{Proof of Theorem~\ref{th:average_case_hardness}}\label{sec:proof_hardness}

\begin{theoremrestated}{th:average_case_hardness}
Fix $\varepsilon,\tau\ge 0$ and set $\xi_G \coloneqq M_G-\Delta_G-(\varepsilon+\tau)\ge 0$, where $(M_G,\Delta_G)$ are defined as above.
Any algorithm that $\varepsilon$-learns a $\beta$-fraction of Born distributions $\mathcal{P}_G$ with $\tau$-accurate statistical queries must use $q$ queries satisfying
\begin{align*}
q+1 \;\ge\;
\begin{cases}
\displaystyle
\frac{\beta - 2 \exp\!\left(- \frac{(2^{n}-2)\,\xi_{SO}^2}{8}\right)}
     {2 \exp\!\left(-\frac{ (2^n -2)\,\tau^2}{32}\right)}
&\text{for } G=SO(D),
\\[1.25em]
\displaystyle
\frac{\beta - 2 \exp\!\left(- \frac{(2^{n-1}+1)\,\xi_{SU}^2}{2}\right)}
     {2 \exp\!\left(-\frac{ (2^{n-1} +1)\,\tau^2}{8}\right)}
&\text{for } G=Sp(D/2),
\\[1.25em]
\displaystyle
\frac{\beta - 2 \exp\!\left(- \frac{2^n \,\xi_{SU}^2 }{4}\right)}
     {2 \exp\!\left(- \frac{2^n \,\tau^2 }{16}\right)}
&\text{for } G=SU(D).
\end{cases}
\end{align*}
\end{theoremrestated}
\begin{proof}
The maximally distinguishable fraction $\mathfrak{f}$ can be bounded for all classical compact groups by Lévy's Lemma, given that
$ f\colon G \to \mathbb{R}, U \mapsto  \mathbb{E}_{x \sim P_U}\phi(x)  = \text{Tr}(\ketbra{0}^{\otimes n} U^\dag \Phi U )$,
$
    \Phi = \sum_x \phi(x) \ketbra{x}
$, is $2$-Lipschitz combining~\cite[Lemma 22]{nietner2023average} and Lemma \ref{lemma:lipschitz}. Moreover, the average over all pure states yields the maximally mixed state  by Lemma \ref{lemma:uniform} and by linearity  $\mathbb{E}_{U\sim \mu_G} f(U) = \mathbb{E}_{x \sim \mathcal{U}} \phi(x) $.
Hence $\mathfrak{f} \leq  2 e^{-\frac{\tau^2 }{8C_G}} $.
The total variation distance between the Born and uniform distributions is $1$-Lipschitz by~\cite[Lemma 23]{nietner2023average} together with Lemma~\ref{lemma:lipschitz}.
Lemma~\ref{lem:tv_distance_bounds} bounds the average total variation distance to the uniform distribution.
Setting  $\xi_G = M_G - \Delta_G -(\varepsilon + \tau) \geq 0 $, the probability of trivial learning can be bounded as $\mathfrak{u}\leq 2 e^{-\frac{\xi_G^2}{2C_G}}$. The query complexity lower bound follows then from Lemma \ref{lemma:nie}.
\end{proof}
\end{document}